\documentclass[5pt]{article}

\usepackage{amsthm,latexsym, amsfonts,amssymb,amstext, fullpage, hyperref, bbm} 
\usepackage{algorithmic, algorithm}
\usepackage{amsmath,amsfonts}
\usepackage[margin=1in]{geometry}
\usepackage{enumerate}

\usepackage{amsthm} 

\newcommand{\BIGOP}[1]{\mathop{\mathchoice%
{\raise-0.22em\hbox{\huge $#1$}}%
{\raise-0.05em\hbox{\Large $#1$}}{\hbox{\large $#1$}}{#1}}}

\newcommand{\bigtimes}{\BIGOP{\times}}

\newtheorem{theorem}{Theorem}[section]

\newtheorem{lemma}[theorem]{Lemma}

\newtheorem{claim}[theorem]{Claim}
\newtheorem{proposition}[theorem]{Proposition}
\newtheorem{definition}[theorem]{Definition}
\newtheorem{remark}[theorem]{Remark}

\newtheorem{fact}[theorem]{Fact}


\newcommand{\algwidth}{0.97\textwidth}

\newcommand{\E}{{\mathbb{E}}}
\newcommand{\eps}{\varepsilon}

\newcommand{\cB}{\mathcal{B}}

\newcommand{\cI}{\mathcal{I}}

\newcommand{\cG}{\mathcal{G}}
\newcommand{\cM}{\mathcal{M}}

\newcommand{\cX}{\mathcal{X}}

\newcommand{\cU}{\mathcal{U}}
\newcommand{\cT}{\mathcal{T}}

\newcommand{\cF}{\mathcal{F}}


\newcommand{\Ex}[1]{\mathop{\mathbb{E}}\displaylimits_{#1}}
\newcommand{\Div}[2]{\mathbb{D} \left (  #1 \| #2 \right) }

\newcommand{\Mbi}{M_{B_i}}
\newcommand{\mbi}{m_{B_i}}
\newcommand{\Mbiu}{M^u_{B_i}}

\newcommand{\mbiAll}{m^1_{B_i} m^2_{B_i},\ldots , m^{n_r}_{B_i}}
\newcommand{\MbiAll}{M^1_{B_i} M^2_{B_i},\ldots , M^{n_r}_{B_i}}
\newcommand{\Tsi}{T_{\sig(i)}}

\newcommand{\GiJ}{G(J_i)} 
\newcommand{\Gju}{G^u_{J_1}} 
\newcommand{\Gtu}{G^u_T} 
 
\newcommand{\Gjomu}{G^{-u}_{J_1}} 
\newcommand{\Gtmu}{G^{-u}_T} 

\newcommand{\sig}{\sigma}
\newcommand{\dr}{\Delta_r}
\newcommand{\tr}{5n_r\cdot \left(\sum_{k=0}^{r-1}\Delta_k^{1/2}\right) + 1}

\newcommand{\fs}{\ell\cdot n_r^2}
\newcommand{\psir}{\psi^i_r}


\usepackage{xspace}

\newcommand{ \Js}{J_1,\ldots, J_{n_r^4}}

\newcommand{ \J}{\mathbf{J}}
\newcommand{ \T}{\mathbf{T}}

\newcommand{\supp}{\mathsf{Supp}}

\newcommand{\papertitle}
{Welfare Maximization with Limited Interaction}

\begin{document}

\title{Welfare Maximization with Limited Interaction}

\author{
Noga Alon \thanks{ Tel Aviv University and Microsoft Research.
Research supported in part by BSF grant 2012/107, by ISF grant 620/13,
and by the Israeli I-Core program. nogaa@post.tau.ac.il. }
\and
Noam Nisan \thanks{Department of Computer Science, Hebrew University, and MSR Israel.
Supported by ISF grants 230/10 and 1435/14 administered by the Israeli Academy of Sciences.
noamn@microsoft.com.}
\and
Ran Raz \thanks{Weizmann Institute of Science, Israel, and the Institute for Advanced Study, Princeton, NJ. Research supported by the Israel Science Foundation grant No. 1402/14, by the I-CORE Program of the Planning and Budgeting Committee and the Israel Science Foundation, by the Simons Foundation, by the Fund for Math at IAS, and by the National Science Foundation grant No. CCF-1412958. ran.raz@weizmann.ac.il.}
\and
Omri Weinstein \thanks{Department of Computer Science, Princeton University. Research Supported by a Simons Fellowship award in TCS and a Siebel scholarship.
oweinste@cs.princeton.edu.} }


\maketitle

\begin{abstract}
We continue the study of welfare maximization in unit-demand (matching) markets, in a distributed information model 
where agent's valuations are unknown to the central planner, and therefore communication is required to determine an 
efficient allocation. Dobzinski, Nisan and Oren (STOC'14) showed that if the market size is $n$, then $r$ rounds of interaction 
(with logarithmic bandwidth) suffice to obtain an $n^{1/(r+1)}$-approximation to the optimal social welfare. In particular, 
this implies  that such markets converge to a stable state (constant approximation) in time logarithmic in the market size.

We obtain the first multi-round lower bound for this setup. We show that even if the allowable per-round bandwidth of each 
agent is $n^{\eps(r)}$, the approximation ratio of any $r$-round (randomized) protocol is no better than $\Omega(n^{1/5^{r+1}})$,
implying an $\Omega(\log \log n)$ lower bound on the rate of convergence of the market to equilibrium.

Our construction and technique may be of interest to round-communication tradeoffs in the more general setting of 
combinatorial auctions, for which the only known lower bound is for simultaneous ($r=1$) protocols \cite{DNO14}.
\end{abstract}

\thispagestyle{empty}
\titlepage

\section{Introduction}

This paper studies the tradeoff between the amount of communication and the number of rounds of interaction required to find an (approximately) optimal
matching in a bipartite graph.  In our model there are $n$ ``players'' and $m$ ``items''.  Each player initially knows a subset of the
items to which it may be matched (i.e. $ m$ bits of information).  The players communicate in rounds: in each round each player writes a message on a shared blackboard.  The message can only
depend on what the player knows at that stage: his initial input and all the messages by all other players that were written on the blackboard in previous rounds.

This problem was recently introduced by \cite{DNO14} as a simple market scenario: the players are unit-demand bidders 
and our goal is to find an (approximately) welfare-maximizing allocation of the items to players.
The classic auction of \cite{DGS86} -- that may be viewed as a simple Walrasian-like market process for this setting -- can be 
implemented as to find an approximately optimal allocation where each player needs only send $O(\log n)$ bits of communication (on the average).
The question considered by \cite{DNO14} was whether such a low communication burden suffices without using multiple rounds of interaction.
As a lower bound, they proved 
that a non-interactive protocol, i.e. one that uses a {\em single round of communication}, cannot get a $n^{1/2-\epsilon}$-factor
approximation (for any fixed $\epsilon>0$) with $n^{o(1)}$ bits of communication per player. As 
upper bounds they exhibited (I) an $O(\log n)$-round protocol, where each player sends $O(\log n)$ bits per round,
that gets a $\frac{1}{1-\delta}$-factor approximation (for any fixed $\delta>0$) and (II) for any fixed $r \ge 1$, an $r$-round protocol,
where each player sends $O(\log n)$ bits per round, that gets an $O(n^{1/(r+1)})$-approximation. 

The natural question at this point is whether there are $r$-round protocols with better approximation factors that still use $n^{o(1)}$
bits of communication per player.  This question was left open in \cite{DNO14}, where it was pointed out that
it was even open whether the exactly optimal matching
can be found by $2$-round protocols that use
$O(\log n)$ bits of communication per player. We answer this open problem by proving lower bounds for any fixed number of rounds. 

\vspace{0.15in}
\noindent
{\bf Theorem:} For every $r \ge 1$ there exists $\epsilon(r)=\exp(-r)$, such that every (deterministic or randomized)
$r$-round protocol requires $n^{\epsilon(r)}$ bits of communication per player in order to find a matching whose size is at least 
$n^{-\epsilon(r)}$ fraction of the optimal matching.
\vspace{0.15in}

\paragraph{Our Techniques.}
We construct a recursive family of hard distributions for every fixed number of communication rounds, and use information theoretic machinery
to analyze it. Our proof uses a type of direct-sum based round-reduction argument for multiparty communication complexity. 
Unlike standard round-elimination arguments in the two-party model, 
our instance  size (and thus the number of players) scales with the number of allowable rounds, and therefore eliminating a communication round essentially 
requires embedding a ``low dimensional" instance (with fewer players)  into a ``higher dimensional" protocol (operating over a larger
input), from its second round onwards.  In order to carry out such an embedding, we need a way of sampling the rest of the inputs to the
higher dimensional protocol (including the remaining players) \emph{conditioned on the first message of the protocol}, with no extra communication. 
The  main obstacle is that conditioning on the first message of the ``high dimensional" protocol  \emph{correlates  the private inputs of the players} (i.e., the inputs to 
the ``lower dimensional" protocol) \emph{with the ``missing" inputs}, and it is not hard to see that, in general, such sampling cannot be done without communication!
Circumventing this major obstacle calls for a subtle construction and analysis, which ensures the aforementioned correlations
remain ``local" and therefore allows to perform the embedding using a combination of private  and public randomness.
Our constructed family of distributions is designed to facilitate such embedding (using certain conditional independence properties) on one hand,
and yet retain a ``marginal indistinguishability" property which is essential
to keep the information argument above valid (we discuss this further in Section \ref{sec_main_lb}).

\subsection{More context and related models}

The bipartite matching problem is clearly a very basic one and obviously models a host of situations beyond the economic one
that was the direct motivation of \cite{DNO14} and of this paper.  Despite having been widely studied, even its algorithmic status is not well understood,
and it is not clear whether a nearly-linear time algorithm exists for it.  (The best 
known running time (for the dense case) is the 40-year old $O(n^{2.5})$ algorithm of \cite{HK73}, but for special cases
like regular or near-regular graphs nearly linear times are known (e.g. \cite{Alon03, Y13})).  In parallel computation, a major
open problem is whether bipartite matching  can be solved in deterministic parallel poly-logarithmic time
with a polynomial number of processors (Randomized parallel algorithms for the problem \cite{MVV87, KUW85} have been known for over 25 years).  
It was suggested in
\cite{DNO14} that studying the problem in the communication complexity model is an approach that might lead to algorithmic insights as well.

The bipartite matching problem has been studied in various other multi-party models that focus on communication as well. 
In particular,
strong and tight bounds for approximate matching are known in the 
weaker ``message passing'' or ``private channels'' models 
\cite{HBMQ13} that have implications to models of parallel and distributed computation .  Related work has also been done in networked distributed computing models, e.g., \cite{LPP08}.  ``One-way'' communication models are used to analyze streaming or
semi-streaming models and some upper bounds (e.g.,  \cite{K12})  as well as weak lower bounds \cite{GKK12} are known for approximate 
matchings in these models. For ``$r$-way" protocols, a super-linear communication lower bound was recently shown by \cite{GO13} for 
\emph{exact} matchings, in an incomparable model\footnote{Besides of the fact that the lower bound in \cite{GO13} applies only for testing \emph{exact} matchings 
and not approximate matchings, their model consists of a small number of parties (\emph{constant or logarithmic} in $n$) who are communicating in some fixed number 
of \emph{sequential} rounds (not simultaneous). 
The input itself of each player is therefore \emph{super-linear}
in the number of nodes of the input graph ($n$), and indeed they 
prove a super-linear communication lower bound, which is clearly 
impossible in our model. 
The \cite{GO13} model was motivated by streaming lower bounds and does not seem to capture the economic scenario we attempt to model in this paper (i.e., that of private-valuations) and therefore this result is incomparable to ours, as also evidenced by the distinct proof-techniques.}.  A somewhat more detailed survey of these related models can be found in the appendix of \cite{DNO14}.

It should be noted that the open problems mentioned above remain so even in the standard two-party setting where each of the two players holds all the information of $n/2$ of our players.  We do not know any better upper bounds than what is possible in the multi-player 
model, and certainly, as the model is stronger, no better lower bounds are known.  We also do not know whether our lower bound (or the single round one of \cite{DNO14}) applies also in this stronger two-player model.

\subsection{Open problems}

There are many open problems related to our work.  Let us mention a few of the most natural ones.
Our first open problem is closing the gap between our lower bound and the upper bound:  We show that $r=\Omega(\log\log n)$ rounds of communication 
are required to achieve constant approximation ratio using poly-logarithmic bits per player, while the upper bound is $r=O(\log n)$.  We believe 
that the upper bound is in fact tight, and improving the lower bound is left as our first and direct open problem.  

Another interesting direction is trying to extend our lower bound technique to obtain similar-in-spirit 
round-communication tradeoffs for the more general setup of combinatorial auctions, also studied by 
\cite{DNO14}. From a communication complexity perspective, lower bounds 
in this setup are more compelling, since player valuations require \emph{exponentially} many bits 
to encode, hence interaction has the potential to reduce the overall communication 
(required to obtain efficient allocations) from exponential to polynomial.  Indeed, it is shown in \cite{DNO14} that, in the 
case of \emph{sub-additive bidders}, there is an $r$-round randomized protocol that obtains an 
$\tilde{O}(r \cdot m^{1/(r+1)})$-approximation to the optimal social welfare, 
where in each round each player sends $poly(m, n)$ bits. 
Once again, an (exponential in $m$) lower bound on the communication was given only 
for the case of simultaneous protocols ($r=1$) and the natural question is to extend it to multiple rounds as well. 
 
A more general open problem advocated by \cite{DNO14} is to analyze the communication complexity of 
finding an {\emph exact} optimal matching.  One may naturally conjecture that $n^{\Omega(1)}$ rounds of interaction 
are required for this if each player only sends $n^{o(1)}$ bits in each round, but
no super-logarithmic bound is known.
The communication complexity of the problem without any limitation on the number of rounds is also open:
no significantly super linear, $\omega(n \log n)$, bound is known, while the best upper bound known is $\tilde{O}(n^{3/2})$.


\vspace{0.1in}\section{preliminaries}

We reserve capital letters for random variables, and calligraphic letters for sets. 
The $\ell_1$ (statistical) distance between two distributions in the same probability space is denoted
$|\mu - \nu| := \frac{1}{2}\cdot \sum_{a} |\mu(a)-\nu(a)|$. 
We write $X\perp Y \;| \; Z$ to denote that $X$ and $Y$ are statistically independent conditioned on the random variable $Z$.
For a vector random variable $X=X_1X_2\ldots X_s$, we sometimes use the shorthands $X_{\leq i}$ and $X_{-i}$ to denote  $X_1X_2\ldots X_i$ 
and $X_1X_2\ldots X_{i-1},X_{i+1},\ldots \ldots X_s$ respectively (similarly, $X^{-i} := X^1X^2\ldots X^{i-1}X^{i+1}\ldots X^{s}$). 
We write $A\in_R\cU$ to denote a uniformly distributed random variable over 
the set $\cU$. We use the terms ``bidders" and ``players" interchangeably
throughout the paper.

\vspace{0.1in}\subsection{Communication Model}

Our framework is the \emph{number-in-hand} (NIH) multiparty communication complexity 
model with shared blackboard. In this model, $n$ players receive inputs 
$(x_1,x_2,\ldots, x_n) \in \cX_1\times \cX_2\times \ldots \cX_n$ respectively. 
In our context, each of the $n$ players (bidders) is associated with a node $u\in U=[n]$ of some bipartite graph $G = (U,V,E)$,
and her input  is the set of incident edges on her node (her demand set of items in $V=[m]$). The players' goal is to compute a 
maximum set of disjoint connected pairs $(u,v) \in E(G)$, i.e., a maximum matching in $G$ (we define this formally below). 

The players communicate in some fixed number of rounds $r$, where in each communication round, players \emph{simultaneously} 
write (at most) $\ell$ bits each on a \emph{shared blackboard} which is viewable to all parties. We sometimes refer to the parameter $\ell$ as
the \emph{bandwidth} of the protocol.
In a deterministic protocol, each player's message should be completely determined by the content of the 
blackboard and her own private input $x_i$. 
In a randomized protocol, the message of each player may further depend on both public and private random coins.
When player's inputs are distributional ($(x_1,x_2,\ldots, x_n) \sim \mu$) which is the setting in this paper, we may assume 
without loss of generality that the protocol is \emph{deterministic}, since the averaging 
principle asserts that there is always some fixing of the randomness that will achieve the same performance with respect to $\mu$.
We remark that by the averaging principle, our main result applies to the randomized setting as well\footnote{More formally,
if there is a distribution $\mu$ on players inputs such that the approximation ratio of any $r$-round \emph{deterministic} protocol with respect to 
$\mu$ is at most $\alpha$ in expectation, then fixing the randomness of the protocol would yield a deterministic protocol with the same performance,
thus the former lower bound applies to randomized $r$-round protocols as well.}.

The transcript of a protocol $\pi$ (namely, the content of the blackboard) when executed on an input graph $G$ is denoted by $\Pi(G)$, or simply $\Pi$
when clear from context. 
At the end of the $r$'th communication round, a 
\emph{referee} (the ``central planner" in our context) computes a matching 
$\hat{\cM}(\Pi)$
, which is completely determined by $\Pi$. We call this the \emph{output} of the protocol. \\

We will be interested in protocols that compute \emph{approximate matchings}. To make this more formal,
let $\cG(n,m)$ denote the family of bipartite graphs on $(n,m)$-vertex sets respectively, and denote by
$\cF(n,m)$ the family of all matchings in $\cG(n,m)$ (not necessarily maximum matchings). Denote by  
$|\cM(G)|$ the size of a maximum matching in the input graph $G$. 
We require that the output of any protocol 
satisfies $\hat{\cM}(\Pi) \in \cF(n,m)$. The following definition is central to this work. 
\begin{definition}[Approximate Matchings]
We say that a protocol $\pi$ computes an $\alpha$-\emph{approximate matching} ($\alpha\geq 1$)
if $|\hat{\cM}(\Pi)\cap E(G)|$ is at least $\frac{1}{\alpha}\cdot |\cM(G)|$, i.e., if the number of matched pairs $(u,v) \in E(G)$ 
is at least a $(1/\alpha)$-fraction of the maximum matching in $G$. 
Similarly, when the input graph $G$ is distributed according to some distribution $\mu$ (i.e., $(x_1,x_2,\ldots , x_n)\sim \mu$),
we say that the \emph{approximation ratio} of $\pi$ is $\alpha$ 
if $$\Ex{G\sim \mu}[|\hat{\cM}(\Pi) \cap E(G)|] \geq \frac{1}{\alpha} \cdot \Ex{G\sim \mu}[|\cM(G)|].$$
\end{definition}
The \emph{expected matching size} of $\pi$ is $\E_\mu[|\hat{\cM}(\Pi) \cap E(G)|]$ (we remark that the ``hard" distribution we construct
in the next section will satisfy $|\cM(G)|\equiv n$ for all $G$ in the support of $\mu$, so the quantity $\Ex{G\sim \mu}[|\cM(G)|]$ will always be $n$). 
Note that these definitions in particular allow the protocol to be \emph{erroneous}, i.e., the referee is allowed to output ``illegal" pairs 
$(u,v)\notin E(G)$, but we only count the correctly matched pairs. Our lower bound holds even with respect to this more permissive model.



\vspace{0.1in}\subsection{Information theory}

Our proof relies on basic concepts from information theory. For a broader introduction to the field, and 
proofs of the claims below, we refer the reader  to the excellent monograph of 
\cite{CoverT91}.

For two distributions $\mu$ and $\nu$ in the same probability space, the 
\it Kullback-Leiber \rm divergence between $\mu$ and $\nu$
is defined as
\begin{align} \label{def_mi_div}
\Div{\mu(a)}{\nu(a)} := \E_{a\sim \mu}\left[\log \frac{\mu(a)}{\nu(a)}\right].
\end{align}


The following well known inequality upper bounds the statistical distance between two distributions in terms of their KL Divergence:
\begin{lemma}[Pinsker's inequality] \label{lemma:pinsker} 
For any two distributions $\mu$ and $\nu$, \[ |\mu(a)- \nu(a)|^2 \leq \dfrac{1}{2} \cdot \Div{\mu(a)}{\nu(a)}. \]
\end{lemma}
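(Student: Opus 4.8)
The plan is to reduce the general inequality to the two-point (Bernoulli) case and then dispatch that case by elementary calculus. First I would let $\cA := \{a : \mu(a) \geq \nu(a)\}$ and set $p := \mu(\cA)$, $q := \nu(\cA)$; since $\sum_a(\mu(a)-\nu(a))=0$, the set $\cA$ is exactly the one realizing the statistical distance, so $|\mu(a)-\nu(a)| = \mu(\cA)-\nu(\cA) = p-q \geq 0$. The key step is the claim that coarsening the sample space to the binary partition $\{\cA,\cA^c\}$ cannot increase the divergence, i.e. $\Div{\mu(a)}{\nu(a)} \geq p\log\frac{p}{q} + (1-p)\log\frac{1-p}{1-q}$. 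This is the data-processing inequality for KL divergence applied to the deterministic map $a \mapsto \mathbf{1}[a\in\cA]$; if one wants a self-contained argument, it follows in two lines from the log-sum inequality $\bigl(\sum_i x_i\bigr)\log\frac{\sum_i x_i}{\sum_i y_i} \leq \sum_i x_i\log\frac{x_i}{y_i}$ (itself a consequence of convexity of $t\mapsto t\log t$), applied separately to the indices in $\cA$ and to those in $\cA^c$.

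It then remains to prove the scalar inequality $(p-q)^2 \leq \tfrac12\bigl(p\log\tfrac pq + (1-p)\log\tfrac{1-p}{1-q}\bigr)$ for all $0 \leq q \leq p \leq 1$. I would fix $p$ and consider $g(q) := p\ln\tfrac pq + (1-p)\ln\tfrac{1-p}{1-q} - 2(p-q)^2$, working with the natural logarithm (passing to another base only scales the logarithmic part by a constant $\geq 1$, which can only help). A direct differentiation gives $g'(q) = -\tfrac pq + \tfrac{1-p}{1-q} + 4(p-q) = (p-q)\bigl(4 - \tfrac{1}{q(1-q)}\bigr)$, which is $\leq 0$ on $(0,p]$ because $q(1-q)\leq\tfrac14$. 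Hence $g$ is non-increasing on $(0,p]$, so $g(q)\geq g(p)=0$ for every $q\leq p$; the degenerate case $q=0$ is immediate, since then the right-hand side is $+\infty$ unless $p=0$ as well. Combining with the reduction step yields the lemma.

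Honestly there is no real obstacle here: this is the classical Csisz\'ar--Kullback--Kemperman--Pinsker argument, and since the excerpt already refers the reader to \cite{CoverT91} for proofs of the information-theoretic preliminaries, the cleanest write-up is probably to simply cite it. If a self-contained proof is wanted, the only two points deserving care are the monotonicity-under-coarsening step (equivalently, why $\cA$ is the optimal ``test'' event) and making sure the boundary cases $p,q\in\{0,1\}$ are handled under the conventions $0\log 0 = 0$ and $x\log(x/0)=+\infty$ for $x>0$; everything else is routine single-variable calculus.
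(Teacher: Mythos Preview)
Your proof is correct and is the standard Csisz\'ar--Kullback--Kemperman argument. The paper, however, does not give any proof of this lemma at all: it is stated in the preliminaries section, where the authors explicitly defer all proofs of the information-theoretic facts to \cite{CoverT91}. So there is nothing to compare against beyond the citation itself, which you already anticipated in your final paragraph.
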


A related measure which is central to this paper is that of \emph{mutual information}, 
which captures correlation between  random variables.

\begin{definition}[Conditional Mutual Information]\label{def_MI_div}
Let $A, B, C$ be jointly distributed random variables. The \emph{Mutual Information} between $A$ and $B$
conditioned on $C$ is 
\begin{align*}
& I(A;B|C) :=  
 \Ex{\mu(cb)}{ \Div{\mu(a|bc)}{\mu(a|c)}}  
= \Ex{\mu(ca)}{ \Div{\mu(b|ac)}{\mu(b|c)}} =  \sum_{a,b,c} \mu(abc) \log \frac{\mu(a|bc)}{\mu(a|c)}.
\end{align*} 
\end{definition}

The above definition can be interpreted as follows: $I(A;B|C)$ is large if the distribution
$(A|B=b, C=c)$ is ``far" from $(A|C=c)$  for typical values of $b,c$, which means that $B$ provides a lot of information about $A$ 
conditioned on $C$.
We note that an equivalent, more intuitive definition of (conditional) mutual information is $ I(A;B|C) = H(A|C)-H(A|BC)$,
where $H(A|C)$ is the (expected) \emph{Shannon Entropy} of the random variable $A$ conditioned on $C$. 
Thus $A$ and $B$ have large mutual information conditioned on $C$, if further conditioning on $B$ significantly 
reduces the entropy of $A$. We prefer Definition \ref{def_MI_div} as it is more appropriate for our proof,  but 
we note that the latter one immediately implies 
\begin{fact} \label{fact_mi_bounded_by_entropy}
$I(A;C|D) \leq H(A|D) \leq H(A) \leq |A|$,
\end{fact}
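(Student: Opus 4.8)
The statement to prove is Fact \ref{fact_mi_bounded_by_entropy}: $I(A;C|D) \leq H(A|D) \leq H(A) \leq |A|$.

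The plan is to establish the chain of inequalities from right to left, since each follows from a standard property of Shannon entropy together with the alternative characterization $I(A;C|D) = H(A|D) - H(A|CD)$ noted in the text just above the statement.

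\textbf{Step 1: $H(A) \leq |A|$.} Here $|A|$ denotes the number of bits needed to describe $A$, i.e., $\log$ of the size of the support of $A$. I would invoke the fact that the Shannon entropy of a random variable taking values in a finite set $\cA$ is maximized by the uniform distribution, so $H(A) \leq \log|\cA| =: |A|$. This is a one-line consequence of concavity of $\log$ (Jensen's inequality) or of the non-negativity of KL divergence $\Div{\mu(a)}{\mathrm{unif}(\cA)} \geq 0$, which itself follows from Pinsker (Lemma \ref{lemma:pinsker}) or directly from Jensen.

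\textbf{Step 2: $H(A|D) \leq H(A)$.} This is the statement that ``conditioning does not increase entropy.'' I would write $H(A) - H(A|D) = I(A;D) \geq 0$, where non-negativity of mutual information follows from Definition \ref{def_MI_div} (with trivial conditioning) together with the non-negativity of KL divergence. Equivalently, $I(A;D) = \Ex{\mu(d)}{\Div{\mu(a|d)}{\mu(a)}} \geq 0$ since each divergence term is non-negative.

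\textbf{Step 3: $I(A;C|D) \leq H(A|D)$.} Using $I(A;C|D) = H(A|D) - H(A|CD)$, it suffices to observe $H(A|CD) \geq 0$, i.e., conditional Shannon entropy is non-negative (it is an average of entropies of distributions, each non-negative). Combining the three steps gives the full chain. The only mild subtlety — hardly an obstacle — is being explicit that $|A|$ means $\log$ of the support size and that all the ``folklore'' facts (non-negativity of entropy, of mutual information, of KL divergence, and the maximum-entropy property of the uniform distribution) are exactly the standard results collected in \cite{CoverT91}, which the excerpt explicitly permits citing. Thus the proof is essentially a concatenation of textbook inequalities, and I would present it as such in two or three lines.
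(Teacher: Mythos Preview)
Your proposal is correct and essentially identical to the paper's own justification: the paper simply notes that the identity $I(A;C|D)=H(A|D)-H(A|CD)$ ``immediately implies'' the fact, adding that the middle inequality holds ``since conditioning never increases entropy'' and that $|A|$ means $\log|\supp(A)|$. Your three steps spell out precisely these standard facts.
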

where the last term denotes the cardinality of $\log |\supp(A)|$ of the random variable $A$, and 
the second transition follows since conditioning never increases entropy.

The most important property of mutual information is that it satisfies the following chain rule:
\begin{fact}[Chain rule for mutual information] \label{fact_chain_rule}
Let $A, B, C, D$ be jointly distributed random variables. Then
$I(AB; C|D) = I(A; C|D) + I(B; C|AD)$.
\end{fact}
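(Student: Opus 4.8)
The plan is to derive the identity directly from the divergence-based definition of conditional mutual information (Definition \ref{def_MI_div}), using nothing more than the chain rule for conditional probabilities. The whole argument is essentially a one-line manipulation inside the logarithm followed by a regrouping of the sum, so I do not expect any genuine obstacle; the only point that requires a moment's care is recognizing the second term correctly by treating the pair $(A,D)$ as a single composite conditioning variable.

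First I would expand, applying Definition \ref{def_MI_div} with the role of ``$A$'' played by the pair $AB$ and the role of ``$B$'' by $C$:
\[
I(AB;C\mid D) = \sum_{a,b,c,d} \mu(abcd)\,\log\frac{\mu(ab\mid cd)}{\mu(ab\mid d)}.
\]
Next I would factor both the numerator and the denominator inside the logarithm by the chain rule for conditional probabilities, $\mu(ab\mid cd) = \mu(a\mid cd)\cdot\mu(b\mid acd)$ and $\mu(ab\mid d) = \mu(a\mid d)\cdot\mu(b\mid ad)$, and then split $\log$ of a product into a sum of logs. This gives
\[
I(AB;C\mid D) = \sum_{a,b,c,d}\mu(abcd)\log\frac{\mu(a\mid cd)}{\mu(a\mid d)} \;+\; \sum_{a,b,c,d}\mu(abcd)\log\frac{\mu(b\mid acd)}{\mu(b\mid ad)}.
\]

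Finally I would identify the two resulting sums. In the first sum the summand does not depend on $b$, so marginalizing out $b$ leaves $\sum_{a,c,d}\mu(acd)\log\frac{\mu(a\mid cd)}{\mu(a\mid d)}$, which is exactly $I(A;C\mid D)$ by Definition \ref{def_MI_div}. For the second sum I would regard $(A,D)$ as a single random variable; then the summand is precisely of the form in Definition \ref{def_MI_div} for the mutual information between $B$ and $C$ conditioned on $(A,D)$, so the sum equals $I(B;C\mid AD)$. Adding the two pieces yields $I(AB;C\mid D) = I(A;C\mid D) + I(B;C\mid AD)$, as claimed. (As an alternative route one could instead write $I(X;Y\mid Z) = H(X\mid Z) - H(X\mid YZ)$ and apply the chain rule for Shannon entropy twice; but the direct computation above has the advantage of not requiring us to introduce entropy or its chain rule at all, staying entirely within the definition the paper actually uses.)
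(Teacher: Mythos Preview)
Your argument is correct. The paper does not give its own proof of this fact; it is stated as a standard property of mutual information with a reference to Cover--Thomas for background, so there is nothing to compare against beyond noting that your derivation from Definition~\ref{def_MI_div} is valid and self-contained.
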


\begin{lemma}[Conditioning on independent variables increases information] \label{lem_cond_indep}
Let $A,B,C,D$ be jointly distributed random variables. If $I(A;D|C)=0$, 
then it holds that 
$I(A;B|C) \leq I(A;B|CD)$.
\end{lemma}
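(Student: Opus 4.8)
The plan is to derive the inequality purely from the chain rule (Fact~\ref{fact_chain_rule}) together with the nonnegativity of conditional mutual information, applied to the composite variable $BD$. Concretely, I would expand $I(A;BD\,|\,C)$ in the two possible orders.

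First I would write, grouping $B$ first:
\[
I(A;BD\,|\,C) \;=\; I(A;B\,|\,C) \;+\; I(A;D\,|\,BC).
\]
Then I would write the same quantity grouping $D$ first:
\[
I(A;BD\,|\,C) \;=\; I(A;D\,|\,C) \;+\; I(A;B\,|\,CD).
\]
By hypothesis $I(A;D\,|\,C)=0$, so the second identity simplifies to $I(A;BD\,|\,C) = I(A;B\,|\,CD)$. Combining the two displays gives
\[
I(A;B\,|\,CD) \;=\; I(A;B\,|\,C) \;+\; I(A;D\,|\,BC).
\]

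To finish, I would invoke the fact that conditional mutual information is always nonnegative, i.e.\ $I(A;D\,|\,BC)\geq 0$ (this is immediate from the divergence-based Definition~\ref{def_MI_div}, since each term $\Div{\mu(a|dbc)}{\mu(a|bc)}$ is a KL divergence and hence nonnegative, and the outer average is over a probability distribution). Dropping this nonnegative term yields $I(A;B\,|\,CD) \geq I(A;B\,|\,C)$, which is exactly the claim. There is no real obstacle here; the only thing to be careful about is that the chain rule as stated in Fact~\ref{fact_chain_rule} is applied with the correct conditioning sets in both expansions, and that nonnegativity of conditional mutual information is either cited or noted as following directly from the definitions given.
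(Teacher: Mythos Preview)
Your proof is correct and is essentially identical to the paper's own argument: both expand $I(A;BD\mid C)$ via the chain rule in the two possible orders, use the hypothesis $I(A;D\mid C)=0$ on one side and nonnegativity of $I(A;D\mid BC)$ on the other, and combine.
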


\begin{proof}
We apply the chain rule twice. On one hand, we have
$ I(A;BD|C) = I(A;B|C) + I(A;D|CB) \geq I(A;B|C)$, 
since mutual information is nonnegative. On the other hand,
$ I(A;BD|C) = I(A;D|C) + I(A;B|CD) = I(A;B|CD)$, 
since $I(A;D|C)=0$ by assumption. Combining both equations completes the proof.
\end{proof}

\vspace{0.1in}

On the other hand, the following lemma asserts a condition under which conditioning \emph{decreases} information:
\begin{lemma} \label{lem_cond_decreases_info}
Let $A,B,C,D$ be jointly distributed random variables such that $I(B;D|AC)=0$. Then 
it holds that 
$I(A;B|C) \geq I(A;B|CD)$.
\end{lemma}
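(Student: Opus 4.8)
The plan is to mirror the argument used for Lemma \ref{lem_cond_indep}: apply the chain rule for mutual information (Fact \ref{fact_chain_rule}) twice to the same quantity, expanded in two different orders, and use the hypothesis to kill one term in one of the expansions while nonnegativity of mutual information handles the other. The natural quantity to expand here is $I(B;AD\,|\,C)$, since it contains both the term $I(B;D\,|\,AC)$ that the hypothesis makes vanish and the two terms $I(A;B\,|\,C)$ and $I(A;B\,|\,CD)$ that appear in the claimed inequality.

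First I would expand $I(B;AD\,|\,C)$ by peeling off $A$ first: $I(B;AD\,|\,C) = I(B;A\,|\,C) + I(B;D\,|\,AC)$. By the hypothesis $I(B;D\,|\,AC)=0$, so this reads $I(B;AD\,|\,C) = I(A;B\,|\,C)$ (using symmetry $I(B;A\,|\,C)=I(A;B\,|\,C)$). Next I would expand the same quantity by peeling off $D$ first: $I(B;AD\,|\,C) = I(B;D\,|\,C) + I(B;A\,|\,DC) = I(B;D\,|\,C) + I(A;B\,|\,CD)$. Combining the two expansions gives the identity
\[
I(A;B\,|\,C) \;=\; I(B;D\,|\,C) \;+\; I(A;B\,|\,CD).
\]
Since mutual information is always nonnegative, $I(B;D\,|\,C)\ge 0$, and the inequality $I(A;B\,|\,C) \ge I(A;B\,|\,CD)$ follows immediately.

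I do not anticipate any real obstacle: this is a two-line manipulation once the right quantity ($I(B;AD\,|\,C)$) is identified, and the only facts invoked are the chain rule, the symmetry of mutual information in its first two arguments, and nonnegativity of conditional mutual information, all of which are available from the preliminaries. The only point worth a moment's care is matching the conditioning sets correctly when applying the chain rule in each order, i.e. that conditioning on $C$ is carried through both peels.
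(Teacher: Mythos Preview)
Your proposal is correct and follows essentially the same argument as the paper: both expand $I(B;AD\mid C)$ (equivalently $I(AD;B\mid C)$) via the chain rule in two orders, use the hypothesis $I(B;D\mid AC)=0$ to eliminate one term, and conclude by nonnegativity of $I(B;D\mid C)$. The only cosmetic difference is that the paper writes the resulting identity as $I(A;B\mid CD)=I(A;B\mid C)-I(D;B\mid C)$ rather than $I(A;B\mid C)=I(B;D\mid C)+I(A;B\mid CD)$.
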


\begin{proof}
Once again, we apply the chain rule twice. We have
$I(A; B|CD) = I(AD; B|C) - I(D; B|C) = I(A; B|C) +  I(D; B|AC) - I(D; B|C)
= I(A; B|C) -  I(D; B|C) \leq I(A; B|C).$
\end{proof}



\begin{fact}[Data processing inequality, general case]\label{fact_dp_ineq_general}
Let $X \rightarrow Y \rightarrow Z$ be a Markov chain ($I(X;Z|Y)=0$). Then $I(X;Z) \leq I(X;Y)$.
\end{fact}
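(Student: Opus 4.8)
The plan is to derive the inequality purely from the chain rule for mutual information (\factref{fact_chain_rule}) together with the nonnegativity of (conditional) mutual information, in exactly the same style as the proofs of \lemmaref{lem_cond_indep} and \lemmaref{lem_cond_decreases_info} above. The trick is to introduce the three-way quantity $I(X;YZ)$ as a common pivot and expand it in two different orders, then compare the two expansions.

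First I would expand $I(X;YZ)$ by peeling off $Y$ before $Z$: using symmetry of mutual information and the chain rule, $I(X;YZ) = I(YZ;X) = I(Y;X) + I(Z;X \mid Y) = I(X;Y) + I(X;Z \mid Y)$. Since $X \rightarrow Y \rightarrow Z$ is a Markov chain we have $I(X;Z \mid Y) = 0$ by hypothesis, so this collapses to the identity $I(X;YZ) = I(X;Y)$.

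Next I would expand the very same quantity $I(X;YZ)$ in the other order, peeling off $Z$ before $Y$: $I(X;YZ) = I(ZY;X) = I(Z;X) + I(Y;X \mid Z) = I(X;Z) + I(X;Y \mid Z) \geq I(X;Z)$, where the inequality is simply the nonnegativity of the conditional mutual information $I(X;Y \mid Z)$. Combining the two displays yields $I(X;Y) = I(X;YZ) \geq I(X;Z)$, which is exactly the claim.

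The only point requiring a little care is that \factref{fact_chain_rule} as stated splits the \emph{first} argument of a mutual information term, whereas here I am splitting the second argument $YZ$; this is handled by first invoking symmetry, $I(X;YZ) = I(YZ;X)$, after which the stated form of the chain rule applies verbatim. I do not expect any real obstacle here: the statement is a two-line consequence of the chain rule and nonnegativity, and the entire ``idea'' of the proof is just the choice of $I(X;YZ)$ as the pivot quantity to be double-counted.
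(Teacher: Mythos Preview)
Your proof is correct and is exactly the standard textbook argument (see, e.g., Cover--Thomas): expand $I(X;YZ)$ via the chain rule in two different orders, use the Markov hypothesis $I(X;Z\mid Y)=0$ in one expansion and nonnegativity of $I(X;Y\mid Z)$ in the other, and compare. The paper itself does not supply a proof of this fact at all---it is stated without proof in the preliminaries as a well-known result from information theory, with a general reference to \cite{CoverT91}---so there is nothing to compare against beyond noting that your argument is the canonical one.
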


\begin{fact}[Data processing inequality, special case]\label{fact_dp_ineq}
Let $A,B,C$ be three jointly distributed random variables, where the domain of $B$ is $\Omega$, and let $f:\Omega\longrightarrow \cU$
be any deterministic function. Then  $I(A;B|C) \geq I(A;f(B)|C)$.
\end{fact}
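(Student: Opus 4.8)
The plan is to reduce the statement to three ingredients already available: the chain rule for mutual information (Fact~\ref{fact_chain_rule}), nonnegativity of mutual information, and the elementary observation that a deterministic function of a random variable carries no additional information once we condition on that variable. Write $Z := f(B)$.

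First I would expand $I(A; BZ \mid C)$ using the chain rule with $B$ taken first:
\[
I(A; BZ \mid C) = I(A; B \mid C) + I(A; Z \mid BC).
\]
Since $Z = f(B)$ is completely determined by $B$, we have $H(Z \mid BC) = 0$, and hence by Fact~\ref{fact_mi_bounded_by_entropy} (together with the symmetry of mutual information) the term $I(A; Z \mid BC) = I(Z; A \mid BC) \le H(Z \mid BC) = 0$; as mutual information is nonnegative, this term equals $0$, so $I(A; BZ \mid C) = I(A; B \mid C)$. Next I would expand the same quantity with $Z$ taken first:
\[
I(A; BZ \mid C) = I(A; Z \mid C) + I(A; B \mid ZC) \ge I(A; Z \mid C),
\]
using $I(A; B \mid ZC) \ge 0$. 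Combining the two displays gives
$I(A; B \mid C) = I(A; Z \mid C) + I(A; B \mid ZC) \ge I(A; Z \mid C) = I(A; f(B) \mid C)$,
which is exactly the claim.

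There is essentially no genuine obstacle here; the only point that deserves care is the justification of $I(A; f(B) \mid BC) = 0$, which is why I route it through the entropy bound of Fact~\ref{fact_mi_bounded_by_entropy} rather than asserting it outright. (One could alternatively try to deduce the statement from the general data-processing inequality, Fact~\ref{fact_dp_ineq_general}, applied to the Markov chain $A \to B \to f(B)$ for each fixed value of $C$ and then averaged over $C$, but the chain-rule argument above disposes of the conditioning on $C$ in one stroke and is cleaner to write.)
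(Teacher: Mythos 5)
Your proof is correct. The paper itself does not prove this fact; it is stated as a standard result and the reader is referred to the textbook of Cover and Thomas for the proofs of the information-theoretic preliminaries. Your chain-rule derivation is the standard textbook argument: expand $I(A; B f(B) \mid C)$ both ways, observe that $I(A; f(B) \mid BC) = 0$ because $H(f(B) \mid BC) = 0$ (routed correctly through Fact~\ref{fact_mi_bounded_by_entropy}), and drop the nonnegative remainder $I(A; B \mid f(B) C)$. The alternative you mention, applying Fact~\ref{fact_dp_ineq_general} to the Markov chain $A \to B \to f(B)$ conditioned on each value $C = c$ and averaging, also works but requires a short justification that the chain property holds conditionally; your chain-rule route avoids that and is the cleaner of the two, as you note.
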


\begin{fact}\label{fact_expectation_l_1}
Let $\mu$ and $\nu$ be two probability distributions over a non-negative random variable $X$, whose value is bounded by $X_{max}$. 
Then $\Ex{\nu}[X] \leq  \Ex{\mu}[X] + |\mu-\nu|\cdot X_{max}.$
\end{fact}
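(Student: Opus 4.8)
The plan is to prove this directly from the definition of statistical distance, using only non\-negativity of $X$ and the uniform bound $X \le X_{max}$. First I would write the two expectations as sums over the common sample space $\Omega$ on which $\mu$ and $\nu$ are defined, so that
\[
\Ex{\nu}[X] - \Ex{\mu}[X] \;=\; \sum_{a \in \Omega} X(a)\bigl(\nu(a) - \mu(a)\bigr).
\]

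Next I would split $\Omega$ according to the sign of $\nu(a) - \mu(a)$. Let $\cS := \{a \in \Omega : \nu(a) \ge \mu(a)\}$. On $\Omega \setminus \cS$ each summand $X(a)\bigl(\nu(a)-\mu(a)\bigr)$ is non\-positive (since $X(a)\ge 0$), so discarding those terms only increases the sum; and on $\cS$ we may bound $X(a) \le X_{max}$ because $\nu(a)-\mu(a)\ge 0$ there. This gives
\[
\Ex{\nu}[X] - \Ex{\mu}[X] \;\le\; \sum_{a \in \cS} X(a)\bigl(\nu(a) - \mu(a)\bigr) \;\le\; X_{max} \cdot \sum_{a \in \cS} \bigl(\nu(a) - \mu(a)\bigr).
\]

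Finally I would identify the remaining one\-sided sum with $|\mu-\nu|$. Since $\mu$ and $\nu$ are both probability distributions, $\sum_{a\in\Omega}\bigl(\nu(a) - \mu(a)\bigr) = 0$, so the total positive part equals the total negative part: $\sum_{a \in \cS}\bigl(\nu(a)-\mu(a)\bigr) = \sum_{a \notin \cS}\bigl(\mu(a)-\nu(a)\bigr) =: D$. Then $\sum_{a}\bigl|\mu(a)-\nu(a)\bigr| = 2D$, i.e. $D = |\mu-\nu|$; substituting into the previous display and rearranging yields the claim. There is no genuine obstacle here — the only point to keep straight is the factor of $\tfrac12$ in the definition of $|\mu-\nu|$, which is precisely what is absorbed when passing from the two\-sided sum $\sum_a|\mu(a)-\nu(a)|$ to the one\-sided sum over $\cS$. (Alternatively, one could argue via a maximal coupling of $\mu$ and $\nu$, which agrees with probability $1-|\mu-\nu|$ and differs by at most $X_{max}$ otherwise, but the direct computation above is shorter.)
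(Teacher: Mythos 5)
Your proof is correct. The paper states Fact~\ref{fact_expectation_l_1} without proof (as a standard, folklore bound), so there is no paper argument to compare against; your direct computation via the sign decomposition of $\nu-\mu$ is the usual way to derive it, and you correctly account for the factor of $\tfrac12$ in the paper's definition $|\mu-\nu|=\tfrac12\sum_a|\mu(a)-\nu(a)|$ by matching it with the one-sided sum over $\cS$.
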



\section{A hard distribution for $\large\lowercase{r}$-round protocols}
We begin by defining a family of hard distributions for protocols with $r$ rounds.
Recall that $\cG(n,m)$ is the family of bipartite graphs on $(n,m)$ vertex-sets.
For any given number of rounds $r$, we define a hard distribution $\mu_r$ on bipartite graphs in $\cG(n_r,m_r)$.
$\mu_r$ is recursively defined in Figure \ref{figure:mu_r}.\\ 

\vspace{.2in}

\begin{figure}[H]
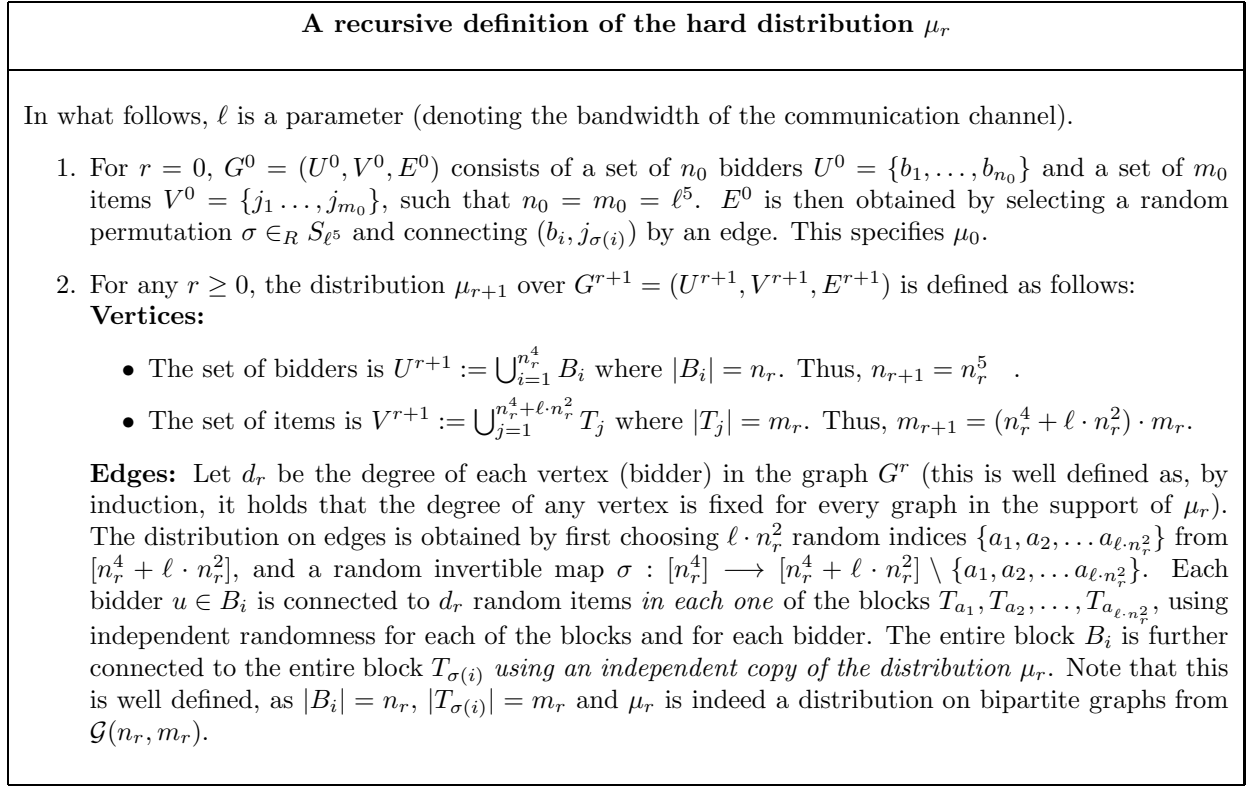

\begin{tabular}{|l|}
\hline
\begin{minipage}{\algwidth}
\vspace{1ex}
\begin{center}
\textbf{A recursive definition of the hard distribution $\mu_r$}
\end{center}
\vspace{0.5ex}
\end{minipage}\\
\hline
\begin{minipage}[t]{\algwidth}
\vspace{1ex}

In what follows, $\ell$ is a parameter (denoting the bandwidth of the communication channel). 
\begin{enumerate}
\item For $r=0$, $G^0=(U^0,V^0,E^0)$ consists of a set of $n_0$ bidders $U^0 = \{b_1,\ldots, b_{n_0}\}$ and a set of $m_0$ items 
$V^0 = \{j_1\ldots , j_{m_0}\}$, such that
$n_0 = m_0 = \ell^{5}$. $E^0$ is then obtained by selecting a random permutation $\sig \in_R S_{\ell^{5}}$ and connecting 
$(b_i,j_{\sig(i)})$ by an edge. This specifies $\mu_0$.
\item For any $r\geq 0$, the distribution $\mu_{r+1}$ over $G^{r+1}=(U^{r+1},V^{r+1},E^{r+1})$ is defined as follows: \\
\bf Vertices: \rm  
\begin{itemize}
\item  The set of bidders is $U^{r+1} := \bigcup_{i=1}^{n_r^4} B_i$ where $|B_i|=n_{r}$. Thus, $n_{r+1} = n_r^{5}$ \; .
\item The set of items is $V^{r+1} := \bigcup_{j=1}^{n_r^4 + \fs} T_j$ 
where $|T_j|=m_{r}$. Thus, $m_{r+1} =  (n_r^4 + \fs)\cdot m_r$.
\end{itemize}
\bf Edges: \rm  
Let $d_r$ be the degree of each vertex (bidder) in the graph $G^r$ (this is well defined as, by induction, it holds that the degree of any vertex 
is fixed for every graph in the support of $\mu_r$).
The distribution on edges is obtained by first choosing $\fs$ random indices $\{a_1,a_2,\ldots a_{\fs} \}$ from 
$[n_r^4 + \fs]$, and a random invertible map $\sig : [n_r^4] \longrightarrow [n_r^4 + \fs]\setminus \{a_1,a_2,\ldots a_{\fs} \}$.
Each bidder $u \in B_i$ 
is connected to $d_r$ random items \emph{in each one} of the blocks $T_{a_1}, T_{a_2},\ldots , T_{a_{\fs}}$, using independent
randomness for each  of the blocks and for each bidder.
The entire block $B_i$ is further connected to the entire block $T_{\sig(i)}$  \emph{using an
independent copy of the distribution $\mu_r$}.
Note that this is well defined, as $|B_i|=n_{r}$, $|T_{\sig(i)}|=m_{r}$ and $\mu_{r}$ is indeed a distribution 
on bipartite graphs from $\cG(n_{r},m_{r})$. 

\end{enumerate}

\vspace{1.5ex}
\end{minipage}\\
\hline
\end{tabular}
\caption{A hard distribution for $r$-round protocols.}
\label{figure:mu_r}
\end{figure}

\begin{remark} \label{rem_properties}
A few remarks are in order:
\begin{enumerate}[(i)]
\item  As standard, the input of each bidder $u \in U^{r+1}$ is the set of incident edges on the vertex $u$ (defined by $\mu_{r+1}$). 
Note that every graph in the support of $\mu_{r+1}$ has a perfect matching ($|\cM(G^{r+1})| = n_{r+1}$). 
\item It is easy to see by induction that : (a) $n_r=\ell^{5^{r+1}}$; and (b) $m_r \leq n^2_r$. \\
(Proof of (b):  By induction on $r$, $m_{r+1} := (n_r^4 + \fs)\cdot m_r \leq (n_r^4 + \fs)\cdot n^2_r \leq 2n^6_r < n^2_{r+1}$).
\item  Note that in $\mu_{r+1}$, each block of bidders $B_i$ is connected to its ``hidden item block" $T_{\sig(i)}$ using a copy of the joint 
distribution $\mu_r$, and to each of the ``fooling item blocks" $T_{a_j}$, using the \emph{product of the marginals} of $\mu_r$, i.e., according to 
$\bigtimes_{u\in n_r} (\mu_r|u)$. This property will be crucial. 
\item Throughout the paper, we assume the bandwidth parameter $\ell$ is larger than some large enough absolute constant (note that by (ii) above, in fact
$\ell =\omega_r(1)$).
\end{enumerate}
\end{remark}

\paragraph{\bf Notation. \rm}

To facilitate our analysis, the following notation will be useful.
Notice that each block $B_i$ of players is connected to exactly $\fs+1$ blocks of items  whose indices we denote by
\[ \cI_i := \{\sig(i),a_1,a_2,\ldots a_{\fs} \}. \]
For each $B_i$, let $\tau_i : \cI_i \longrightarrow [\fs+1]$ be the bijection that maps any index in $\cI_i$
to its location in the sorted list of $\cI_i$ (i.e., $\tau_i^{-1}(1)$ is the smallest index in $\cI_i$, $\tau_i^{-1}(2)$ is the 
second smallest index in $\cI_i$ and so forth).  
We henceforth denote by $G^i_j$ the (induced) subgraph of $G=G^{r+1}$ on the sets $(B_i,T_{\tau_i^{-1}(j)})$,
for each $j\in [\fs+1]$. By a slight abuse of notation, we will sometimes write $G^i_j=(B_i,T_{\tau_i^{-1}(j)})$ to denote the specific set of edges 
of $G^i_j$.
Similarly, for a bidder $u\in B_i$, let $G^u_j=(u,T_{\tau_i^{-1}(j)})$ denote the (induced) subgraph of $G$ on the sets $(u,T_{\tau_i^{-1}(j)})$.
In this notation, the entire input of a player $u\in B_i$ is $\Gamma_u := \{G^u_1,G^u_2,\ldots , G^u_{\fs+1}\}.$ 
Let $$J_i := \tau_i(\sig(i))$$ denote the index of the ``hidden graph" 
$G^i_{J_i} = (B_i,T_{\sig(i)})$. To avoid confusion (with the other indices $j$), we henceforth write $$G(J_i):=G^i_{J_i}.$$ 
Note that by symmetry of our construction, the index $J_i$ is uniformly distributed in $[\fs+1]$. 
The following fact will be crucial to our analysis:

\begin{fact}[Marginal Indistinguishability]\label{fact_marginal_indistinguishability}
For any bidder $u\in B_i$, it holds that $I(\Gamma_u;J_i \; | \; \cI_i)=0.$
\end{fact}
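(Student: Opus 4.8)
The plan is to prove the statement directly from the structure of the construction in Figure~\ref{figure:mu_r}, exploiting the fact that the block $B_i$ is attached to its $\fs+1$ neighboring item-blocks using two \emph{different} mechanisms (one ``hidden'' copy of $\mu_r$, and $\fs$ ``fooling'' copies of the product-of-marginals distribution), but that \emph{from the point of view of a single bidder $u$}, these two mechanisms look identical. Concretely, fix $u\in B_i$ and condition on the neighbor-index set $\cI_i=\{\sig(i),a_1,\ldots,a_{\fs}\}$ (equivalently, on the bijection $\tau_i$, which is determined by $\cI_i$). I would argue that, conditioned on $\cI_i$, the random variable $J_i=\tau_i(\sig(i))$ is uniform over $[\fs+1]$ and, crucially, the input tuple $\Gamma_u=(G^u_1,\ldots,G^u_{\fs+1})$ has a distribution that does \emph{not} depend on the value of $J_i$. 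Establishing this last point is the whole content of the fact, so it is where I would spend the effort.

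The key computation is to describe the conditional law of $\Gamma_u$ given $(\cI_i, J_i=k)$ for each fixed $k\in[\fs+1]$, and check it is the same for every $k$. For a fooling block $T_{\tau_i^{-1}(j)}$ with $j\neq k$, bidder $u$ is connected to $d_r$ uniformly random items of that block, independently across blocks and bidders; by Remark~\ref{rem_properties}(iii) this is exactly the marginal $(\mu_r\mid u)$ of a single player's edge-set under $\mu_r$. For the hidden block $T_{\sig(i)}=T_{\tau_i^{-1}(k)}$, the whole block $B_i$ is connected to $T_{\sig(i)}$ using a fresh copy of $\mu_r$, so the restriction of that copy to the single vertex $u$ is, by definition, distributed exactly according to the marginal $(\mu_r\mid u)$ as well. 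Hence for every index $j\in[\fs+1]$ — hidden or fooling — the block $G^u_j$ is an i.i.d.\ draw from $(\mu_r\mid u)$, and these draws are mutually independent (the hidden copy of $\mu_r$ uses independent randomness from the fooling blocks, and the fooling blocks use independent randomness from one another and from each bidder). Therefore $(\Gamma_u\mid \cI_i,J_i=k)=\bigtimes_{j=1}^{\fs+1}(\mu_r\mid u)$, which is manifestly independent of $k$. This is precisely the assertion $(\Gamma_u\mid\cI_i,J_i)=(\Gamma_u\mid\cI_i)$, i.e.\ $I(\Gamma_u;J_i\mid\cI_i)=0$ by Definition~\ref{def_MI_div} (the divergence term $\Div{\mu(\gamma_u\mid \cI_i, J_i)}{\mu(\gamma_u\mid \cI_i)}$ vanishes pointwise).

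The main obstacle — and the reason the fact is phrased at the level of a single bidder rather than a whole block — is exactly the correlation alluded to in the introduction: for the \emph{block} $B_i$, the joint input $(G^i_1,\ldots,G^i_{\fs+1})$ is \emph{not} independent of $J_i$, because the hidden block is glued together by a genuine copy of $\mu_r$ (which has internal correlations, e.g.\ a perfect matching) whereas the fooling blocks are products of marginals. So I must be careful to only use the single-vertex restriction, where the copy of $\mu_r$ collapses to its marginal and all distinction disappears; I should not accidentally invoke any property of $\mu_r$ beyond ``the restriction of $\mu_r$ to one player equals $(\mu_r\mid u)$,'' which is a tautology. A secondary bookkeeping point is to make the conditioning on $\cI_i$ versus on $\tau_i$ interchangeable (they determine each other) and to note that conditioning additionally fixes which of the $\fs+1$ slots is the hidden one only through its \emph{label} $J_i$, not through any structural feature visible to $u$ — which is the crux of why uniformity of $J_i$ survives the conditioning. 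With these observations the proof is a short direct verification; no induction on $r$ is needed here, since we only use that $\mu_r$ is \emph{some} distribution on $\cG(n_r,m_r)$ and that its per-vertex marginal is well defined.
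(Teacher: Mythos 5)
Your proof is correct and follows essentially the same route as the paper: the paper's proof of \factref{fact_marginal_indistinguishability} simply invokes Remark~\ref{rem_properties}(iii) to conclude that each $G^u_j$ has marginal distribution $(\mu_r\mid u)$ for every $j\in[\fs+1]$, which is precisely your central observation. One small improvement you make over the paper's terse argument: you explicitly note that the components $G^u_1,\ldots,G^u_{\fs+1}$ of $\Gamma_u$ are \emph{mutually independent} given $\cI_i$ (using independent randomness across blocks), which is actually needed — equality of the one-block marginals alone would not suffice to conclude that the joint law of $\Gamma_u$ is invariant under the value of $J_i$; the paper leaves this step implicit.
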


\begin{proof}
Recall that $\Gamma_u = \{G^u_1,G^u_2,\ldots , G^u_{\fs+1}\}$ is the input of bidder $u$. The claim follows directly from property $(iii)$
in Remark \ref{rem_properties}, since 
by definition of our construction, the distribution of edges of $G^u_j = (u,T_{\tau_i^{-1}(j)})$ is $(\mu_{r}|u)$ for all $j\in  [\fs+1]$.
We remark that the above fact implies that, up to a permutation on the names of the items in $V^{r+1}$, $G^u_j \sim G^u_k$ 
for any bidder $u \in B_i$ and any $j\neq k \in [\fs+1]$. 
\end{proof}

Finally, Let  $\cB$ denote the partition of bidders in $U:=U^{r+1}$ into the blocks $B_i$,
and $\cT$ denote the partition of items in $V:=V^{r+1}$ into the blocks $T_j$. 
Throughout the proof, we think of $\cT$ and $\cB$ as fixed, while we think of the names of the bidders in each block of $\cB$ 
and items in each block of $\cT$ as random.
Since $\cT$ and $\cB$ are fixed (publicly known) 
in the distribution $\mu_{r+1}$,
our entire analysis is performed  under the implicit conditioning on $\cT,\cB$.
Note that $\cT$ does not reveal the identity of the ``fooling blocks" $T_{a_j}$, but only the items belonging to each block.


\vspace{0.1in}\section{The lower bound} \label{sec_main_lb}

In this section we prove our main result. Recall that the expected matching size of $\pi$ (with respect to $\mu$) 
is $\E_\mu[|\hat{\cM}(\Pi) \cap E(G)|]$. We shall prove the following theorem.

\begin{theorem}[Main Result] \label{thm_main_lb}
The expected matching size of any $r$-round protocol under $\mu_{r}$ is at most $5n_r^{1-1/5^{r+1}}$. 
This holds as long as the number of bits sent by each player at any round is at most  $\ell = n_r^{1/5^{r+1}}$. 
In particular, since $\mu_r$ has a perfect matching, the approximation ratio of any $r$-round protocol is no better than $\Omega\left( n^{1/5^{r+1}} \right)$. 
\end{theorem}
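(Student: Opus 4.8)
The plan is to prove the theorem by induction on $r$, establishing a round-elimination (round-reduction) statement: an $r$-round protocol $\pi$ on $\mu_r$ with bandwidth $\ell$ and expected matching size $s_r$ yields an $(r-1)$-round protocol on $\mu_{r-1}$ with bandwidth $\ell$ and expected matching size roughly $s_r/n_{r-1}^{O(1)}$ (after renormalizing by the number of blocks). The base case $r=0$ is information-theoretically trivial: since $\mu_0$ is a uniformly random perfect matching on $\ell^5 + \ell^5$ vertices and a $0$-round protocol sees nothing, the referee can match at most $O(1)$ edges in expectation (in fact the expected number of correct guesses of a uniformly random permutation with no information is $1$), which is far below $5 n_0^{1-1/5} = 5\ell^{5(1-1/5)} = 5\ell^4$. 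For the inductive step, recall that in $\mu_r$ each block $B_i$ of $n_{r-1}$ bidders is connected to $\fs+1 = \ell n_{r-1}^2 + 1$ item blocks: one ``hidden'' block $T_{\sig(i)}$ via a fresh copy of $\mu_{r-1}$, and $\fs$ ``fooling'' blocks via the product of marginals $\bigtimes_u (\mu_{r-1}|u)$. By Fact~\ref{fact_marginal_indistinguishability} (Marginal Indistinguishability), from a single bidder's view the hidden block is statistically indistinguishable from the fooling ones: $I(\Gamma_u; J_i \mid \cI_i) = 0$.

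\textbf{Key steps.} First I would set up an information/potential argument: after the first round, the transcript $\Pi_1$ has total length $\le n_r \ell = n_{r-1}^5 \ell$ bits, so by Fact~\ref{fact_mi_bounded_by_entropy} it carries at most that much information about the whole input. The goal is to show that $\Pi_1$ reveals little information about most of the hidden indices $J_i$, and little information about the internal structure of most hidden subgraphs $G(J_i)$ beyond what a single round could reveal ``locally.'' Concretely, I would argue that for a $(1-o(1))$-fraction of the blocks $B_i$, conditioned on $\Pi_1$ and on the public data $\cI_i$, (a) the posterior on $J_i$ is close to uniform on $[\fs+1]$ (using that each of the $\fs+1$ candidate blocks looks marginally the same per-bidder, so $\Pi_1$ can only ``point to'' few of them — a counting argument over the $\le n_r\ell$ bits), and (b) conditioned on $J_i$, the hidden subgraph $G(J_i) \sim \mu_{r-1}$ is still essentially freshly distributed given $\Pi_1$ — i.e., $I(G(J_i); \Pi_1 \mid \cI_i, J_i, \text{other blocks})$ is small on average. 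The crux is then the \emph{embedding}: given an $(r-1)$-round protocol task on a single instance $H \sim \mu_{r-1}$, the $n_{r-1}$ players holding $H$ simulate being block $B_{i}$ inside a large $\mu_r$-instance by (i) using public randomness to sample $\cB,\cT,\cI_i$ and all the \emph{other} blocks' inputs, (ii) placing $H$ into a uniformly random one of the $\fs+1$ slots of block $B_i$ and filling the remaining $\fs$ slots with independent samples from $\bigtimes_u(\mu_{r-1}|u)$ — which they can do \emph{privately} precisely because, by construction, each bidder's input across the $\fs+1$ blocks is a product of per-bidder marginals, so a single player can locally generate her fooling edges, and (iii) running $\pi$'s first round on this composite input. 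This produces $\Pi_1$ for free; then the players run rounds $2,\ldots,r$ of $\pi$ as an $(r-1)$-round protocol, and the referee's output restricted to block $B_i \times T_{\sig(i)}$ gives a matching in $H$.

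\textbf{Main obstacle.} The hard part — flagged by the authors in the ``Our Techniques'' paragraph — is that conditioning on $\Pi_1$ \emph{correlates} the input of block $B_i$ with the inputs of the other blocks and with $\cI_i$, so the naive simulation (sample the rest conditioned on $\Pi_1$) is impossible without communication. The resolution must exploit that these correlations are ``local'': I would show that, averaged over $i$, conditioning on $\Pi_1$ keeps block $B_i$'s input \emph{nearly independent} of the rest, so that the embedding above (which samples the rest \emph{unconditionally}) induces a distribution on the composite input whose law of $(\Pi_1, \text{block } B_i\text{'s posterior})$ is statistically close to the true one; a hybrid/Pinsker argument (Lemma~\ref{lemma:pinsker}, Fact~\ref{fact_expectation_l_1}) bounds the resulting loss in matching size. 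One then needs to additionally use Marginal Indistinguishability to argue that the referee cannot tell which of the $\fs+1$ slots carries the real $H$, so the expected number of correct edges the simulated protocol extracts from $H$ is at least $\approx \frac{1}{\fs+1}\cdot\big(\text{expected edges } \pi \text{ extracts from } B_i\times T_{\sig(i)}\big) - (\text{small error})$. Summing over the $n_{r-1}^4$ blocks and using that $\pi$'s total matching size is $s_r$, this yields an $(r-1)$-round protocol on $\mu_{r-1}$ of expected size $\gtrsim s_r/(n_{r-1}^4(\fs+1)) - \text{error} \approx s_r/n_{r-1}^{O(1)}$; comparing with the inductive hypothesis $s_{r-1}\le 5n_{r-1}^{1-1/5^r}$ and plugging in $n_r = n_{r-1}^5$, $\ell = n_r^{1/5^{r+1}} = n_{r-1}^{1/5^r}$, and tracking the error terms (which is where the constant $5$ and the slack in the exponent $1-1/5^{r+1}$ are spent) forces $s_r \le 5 n_r^{1-1/5^{r+1}}$, completing the induction. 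The final approximation-ratio statement is then immediate since $|\cM(G)| = n_r$ for every $G$ in the support of $\mu_r$.
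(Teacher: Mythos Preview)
Your high-level plan (induction on $r$ via round elimination) matches the paper, but two of your key steps contain genuine gaps.

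\textbf{The embedding does not save a round.} In step (iii) you write ``running $\pi$'s first round on this composite input. This produces $\Pi_1$ for free.'' It does not. After your steps (i)--(ii), each player $u\in B_i$ knows her own input and (from public randomness) all inputs in blocks $B_j$, $j\neq i$; hence she can compute $M_{B_{-i}}$ and her own $M^u_{B_i}$, but she \emph{cannot} compute $M^v_{B_i}$ for $v\neq u$, since $v$'s fooling edges were drawn with $v$'s private coins. Writing those messages on the blackboard is precisely one round of communication, so your simulated protocol has $r$ rounds, not $r-1$. The paper circumvents this by reversing the order: first sample $(M_1,\J,\cI)$ from \emph{public} randomness according to its marginal under $\mu_{r}$, then have each player $u\in B_1$ privately sample her fooling edges from the \emph{conditional} law $(G^u_T\mid M_1,H_u,\J,\cI)$. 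That this conditional law depends only on $u$'s own piece $H_u$ of the hidden graph (and not on $H_{-u}$) is exactly the content of the Conditional Subgraph Decomposition (Lemma~\ref{prop_m_1_restricted_decomposition}), which is the technical heart of the argument and which your proposal neither states nor proves.

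\textbf{The hidden index $J_i$ need not stay uniform.} Your claim (a)---that the posterior on $J_i$ given $\Pi_1,\cI_i$ is close to uniform---is false in general, and the paper explicitly remarks on this: although $I(M^u_{B_i};J_i\mid\cI_i)=0$ for each single bidder, the \emph{joint} message $M_{B_i}$ can essentially determine $J_i$, because the hidden slot carries a correlated sample from $\mu_{r-1}$ while the fooling slots carry products of marginals, and $n_{r-1}\ell$ bits are more than enough to detect that correlation. The paper never tries to hide $J_i$; instead it conditions on $J_i$ throughout and proves (Lemma~\ref{cl_psi_delta_close}) that $I(M_{B_i};G(J_i)\mid J_i,\cI_i)\le n_{r-1}\cdot \ell/(\ell n_{r-1}^2+1)<1/n_{r-1}$, via a per-bidder direct-sum bound together with the fact that conditioning on the other bidders' messages only \emph{decreases} information here (since bidders' inputs are conditionally independent given $G(J_i)$). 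As a consequence, your $1/(\fs+1)$ factor in the final accounting is spurious: once $H$ is placed as the hidden graph, the matching $\pi$ outputs on $B_i\times T_{\sigma(i)}$ is directly a matching in $H$, with no further dilution, and the only loss is the $\Delta_{r-1}^{1/2}\cdot n_{r-1}$ statistical error from replacing $\psi^i_{r-1}$ by $\mu_{r-1}$.
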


The intuition behind the proof is as follows.
Consider some $(r+1)$-round protocol $\pi$ (with bandwidth $\ell$), and let $\Mbi = \MbiAll$ denote the (concatenated) messages sent 
by all of the bidders in a block $B_i$ in the first round of $\pi$. From this point on, we will assume that $\pi$ is a deterministic protocol (since by the 
averaging principle we may fix its randomness without harming the performance). 
Informally speaking, the distribution $\mu_{r+1}$ is designed so that messages of bidders in $B_i$ ($\Mbiu$) convey 
little information about the ``hidden" graph $\GiJ$. Intuitively, this will be true since the \emph{marginal} distribution of the hidden graph $G^u_{J_i}$
for any bidder $u\in B_i$
is indistinguishable from the rest of the ``fooling graphs"  
(Fact \ref{fact_marginal_indistinguishability}) and therefore a bidder in $B_i$ will not be able to distinguish between vertices (items)
in $\bigcup_{j=1}^{\fs} T_{a_j}$ and in $\Tsi$. Using the conditional independence properties of the distribution $\mu_{r+1}$ and the simultaneity of the protocol, 
we will show that the latter condition also implies that the \emph{total} information 
conveyed by $\Mbi$ on $G(J_i)$ is small.  In order to make this  information $\ll 1$ bit, the parameters are chosen so that $n_r$ grows 
doubly-exponentially in $r$ ($n_r = \ell^{5^{r+1}}$), and this choice is the cause for the approximation ratio we eventually obtain.
Intuitively, the fact that little information is conveyed by each block on the ``hidden graph" implies that the distribution of edges in the graph 
$\GiJ$  is still close to $\mu_{r}$ even \emph{conditioned} on the first message of the $i$'th block $\Mbi$. 
Now suppose an $(r+1)$-round protocol finds a large matching with respect to the original distribution $\mu_{r+1}$ (in expectation). 
Then the expected induced matching size on $\GiJ$ must be large on average as well. Hence, ``ignoring" the first round of the protocol, 
we would like to argue that the original protocol 
essentially induces an $r$-round protocol for  finding a large matching with respect to the distribution $\mu_r$, up to some error term (indeed, \emph{some} 
information about $\GiJ$ may have already been discovered in the first round of the protocol, but the argument above ensures that this information is small). 
Doing so essentially reduced the problem to finding a large matching under $\mu_r$ using only $r$ rounds, so we may use an inductive approach to 
upper bound the latter expected matching size.  

Making the latter intuition precise is complicated by the fact that, unlike standard ``round-elimination" arguments in the two-party
setting, in our setup one cannot simply ``project" an $r$-round $n_{r+1}$-party protocol (with inputs $\sim \mu_{r+1}$) directly to the 
distribution $\mu_r$, since a protocol for the latter distribution has only $n_r$ players (inputs). To remedy this, we crucially rely on the conditional 
independence properties of our construction (Lemma \ref{prop_m_1_restricted_decomposition} below) together with an embedding argument to 
obtain the desired lower bound.

The embedding part of the proof (Claim \ref{claim_round_elim}) is subtle, since in general, conditioning on the first message $M_1$ correlates the 
(private) inputs of the players with the ``missing" inputs to the ``higher-dimensional" protocol (the ``fooling item blocks" of $\mu_{r+1}$), so it is not 
clear how the players can sample these ``missing" inputs without communicating.
Luckily and crucially, the edges to the ``fooling blocks" $T_{a_j}$ in $\mu_{r+1}$ were chosen independently for each bidder $u\in U$ (unlike the hidden 
graphs $G(J_i)$ in which players have correlated edges). This independence is what allows to embed a lower-dimensional graph $H\sim \mu_r$ 
and ``complete" the rest of the graph (using a combination of public and private randomness) according to the conditional distribution $(G| M_1, H)$ without 
any communication, thus ``saving" one round of communication.

We now turn to formalize the  above intuition. From this point on, let us use the shorthands 
$$ \J := \Js \;\; ,\;\; \cI := \cI_1,\cI_2,\ldots , \cI_{n^4_r}. $$
Also, for the remainder of the proof, let us define for simplicity
\[ \dr := \frac{1}{n_r} \;. \]
Let $\pi$ be an $(r+1)$-round deterministic protocol. For a given message $m_{B_i}:=\mbiAll$ sent in $\pi$ by the bidders in block $B_i$ in the first round of $\pi$, a 
fixing of the index $J_i=j_i$ of the ``hidden" block of items, and of the partition $\cI_i$,  let  
\[\psir := (\GiJ \mid M_{B_i}= m_{B_i} , J_i =  j_i , \cI_i) \] 
denote the distribution of the ``hidden graph" $G(J_i)$ conditioned on $M_{B_i}, \cI_i$ and $J_i$. The following lemma asserts that, in expectation over the first communication round of $\pi$, the marginal distribution of $\GiJ$ is very close to its original distribution $\mu_r$.

\begin{lemma} \label{cl_psi_delta_close}
For every $i\in[n_r^4]$, 
$$\Ex{\mbi,\cI_i,j_i}\left[ | \psir - \mu_r| \right] \leq \dr^{1/2}. $$
\end{lemma}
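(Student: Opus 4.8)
The plan is to bound the expected statistical distance $\Ex{\mbi,\cI_i,j_i}\left[ | \psir - \mu_r| \right]$ via Pinsker's inequality and convexity, reducing it to a mutual information quantity. Specifically, observe that by Fact \ref{fact_marginal_indistinguishability} the distribution $(\GiJ \mid J_i = j_i, \cI_i)$ is exactly $\mu_r$ (the hidden block looks like every other incident block before the protocol runs), so $\psir$ differs from $\mu_r$ only through the conditioning on the first-round message $M_{B_i}$. Applying Pinsker's inequality (Lemma \ref{lemma:pinsker}) pointwise and then Jensen's inequality (concavity of the square root), I would get
\[ \Ex{\mbi,\cI_i,j_i}\left[ | \psir - \mu_r| \right] \leq \Ex{j_i,\cI_i}\left[ \sqrt{\tfrac12 \Ex{\mbi}\big[ \Div{\psir}{\mu_r} \big]} \right] \leq \sqrt{\tfrac12 \cdot \Ex{j_i,\cI_i}\Ex{\mbi}\big[ \Div{\psir}{\mu_r} \big]}. \]
Now, using $(\GiJ \mid J_i = j_i, \cI_i) = \mu_r$, the inner expectation $\Ex{\mbi}\big[ \Div{\psir}{\mu_r} \big]$ is precisely $I(\GiJ ; M_{B_i} \mid J_i = j_i, \cI_i)$ by Definition \ref{def_MI_div}, so averaging over $j_i, \cI_i$ gives $I(\GiJ ; M_{B_i} \mid J_i, \cI_i)$. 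Thus it suffices to prove $I(\GiJ ; M_{B_i} \mid J_i, \cI_i) \leq \dr = 1/n_r$.

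The core of the argument is therefore the information bound $I(G(J_i) ; M_{B_i} \mid J_i, \cI_i) \leq 1/n_r$. The intuition, flagged in the paper's overview, is that the block message $M_{B_i} = \MbiAll$ has total length at most $n_r \cdot \ell$ bits, but by Fact \ref{fact_marginal_indistinguishability} no single bidder $u \in B_i$ can tell which of the $\fs + 1$ incident item-blocks is the hidden one, so the ``information budget'' of $M_{B_i}$ must be spread roughly evenly across all $\fs + 1 = \ell n_r^2 + 1$ candidate blocks; the hidden block thus receives at most roughly $n_r \ell / (\ell n_r^2) = 1/n_r$ bits. To make this rigorous I would: first, use the chain rule (Fact \ref{fact_chain_rule}) together with the fact that, conditioned on $\cI_i$ and the edges to the fooling blocks, the hidden graph $G(J_i)$ is determined by the edges incident to $B_i$, to relate $I(G(J_i); M_{B_i} \mid J_i, \cI_i)$ to a sum over the $\fs+1$ candidate indices of per-index information terms; second, invoke the marginal indistinguishability (Fact \ref{fact_marginal_indistinguishability}) — likely in the stronger conditional-independence form of Lemma \ref{prop_m_1_restricted_decomposition}, which the excerpt references but does not yet state — to argue that each per-index information term is (up to the averaging over the uniformly random location $J_i \in [\fs+1]$) the same, so the hidden-block term equals the average, which is at most $\frac{1}{\fs+1}$ times the total information $I(\Gamma_{B_i}; M_{B_i} \mid \cI_i) \leq |M_{B_i}| \leq n_r \ell$ (Fact \ref{fact_mi_bounded_by_entropy}); third, plug in $\fs + 1 = \ell n_r^2 + 1 \geq \ell n_r^2$ to conclude the bound is at most $n_r \ell / (\ell n_r^2) = 1/n_r = \dr$. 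Feeding $I(G(J_i); M_{B_i} \mid J_i, \cI_i) \leq \dr$ into the displayed inequality above yields $\Ex{\mbi,\cI_i,j_i}\left[ | \psir - \mu_r| \right] \leq \sqrt{\dr/2} \leq \dr^{1/2}$, as claimed.

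The main obstacle I anticipate is the second step: cleanly justifying that the information the block message reveals about the hidden block is at most a $\frac{1}{\fs+1}$ fraction of the total. The subtlety is that the different bidders $u \in B_i$ have \emph{correlated} edges into the hidden block (they jointly form a copy of $\mu_r$), so one cannot simply treat each bidder separately; one must work at the level of the whole block $B_i$ and exploit the symmetry of the construction under relabeling which of the $\fs+1$ incident item-blocks plays the role of $T_{\sig(i)}$. Concretely, the key identity is that conditioned on $\cI_i$, the joint distribution of $(\Gamma_{B_i}, M_{B_i})$ is invariant under permuting the roles of the $\fs+1$ blocks, with $J_i$ recording a uniformly random choice among them; this should let me write the total information $I(\Gamma_{B_i}; M_{B_i} \mid \cI_i)$ as a sum of $\fs+1$ equal terms (via a chain-rule decomposition that respects this symmetry, using conditional independence of the edge-sets to distinct blocks to drop cross-terms), each equal to $I(G(J_i); M_{B_i} \mid J_i, \cI_i)$ after averaging. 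I'd expect the paper to isolate the required conditional-independence structure in Lemma \ref{prop_m_1_restricted_decomposition}, and Lemma \ref{lem_cond_decreases_info} / Lemma \ref{lem_cond_indep} may be needed to shuffle conditionings when dropping the cross-terms.
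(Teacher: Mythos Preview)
Your reduction via Pinsker and Jensen to the target bound $I(G(J_i);M_{B_i}\mid J_i,\cI_i)\le \dr$ is exactly right, and the intuition that the information budget must be spread across the $\fs+1$ candidate blocks is also correct. But the block-level symmetry argument you propose for the information bound does not go through, and this is a genuine gap.

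The issue is that the two structural properties you want to combine hold under \emph{incompatible} conditionings. The permutation symmetry of $(\Gamma_{B_i},M_{B_i})$ across the $\fs+1$ item blocks holds only after marginalizing over $J_i$; once you condition on $J_i=j$, block $j$ carries a genuine copy of $\mu_r$ while the others carry the product of its marginals, so the distributions are different. Conversely, the conditional independence of the edge-sets $G^i_1,\ldots,G^i_{\fs+1}$ holds only \emph{given} $J_i$; marginalizing over $J_i$ turns the joint law into a mixture of products, which is not a product. So your plan to ``write the total information as $\fs+1$ equal terms via a chain-rule decomposition that drops cross-terms by independence'' cannot be executed at the block level: whichever conditioning you choose, one of the two ingredients fails. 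The paper in fact warns that $I(M_{B_i};J_i\mid\cI_i)\neq 0$ and that $J_i$ may be nearly determined by the full block message --- which is precisely the obstruction to block-level symmetry.

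The paper's fix is the opposite of what you suggest: rather than moving up to the whole block, it first works at the single-bidder level, where both properties \emph{do} hold simultaneously. By Fact~\ref{fact_marginal_indistinguishability}, for each $u\in B_i$ the graphs $G^u_1,\ldots,G^u_{\fs+1}$ are i.i.d.\ given $\cI_i$, so the direct-sum calculation you sketched is valid there and yields $I(\Mbiu;G(J_i)\mid J_i,\cI_i)\le \ell/(\fs+1)<1/n_r^2=\dr^2$ (this also uses a short Markov-chain step to pass from $G^i_j$ to $G^u_j$). The block-level bound is then obtained by the chain rule over bidders together with the observation that, conditioned on $G(J_i),J_i,\cI_i$, the inputs of distinct bidders in $B_i$ are independent, so conditioning on earlier bidders' messages can only \emph{decrease} each term: $I(\Mbiu;G(J_i)\mid M^{<u}_{B_i},J_i,\cI_i)\le I(\Mbiu;G(J_i)\mid J_i,\cI_i)$. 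Summing over the $n_r$ bidders gives $n_r\cdot\dr^2=\dr$. (Lemma~\ref{prop_m_1_restricted_decomposition} is not used here; it enters only later, in the embedding step.)
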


\begin{proof}
We begin by showing that the local message $\Mbiu$ of any bidder $u\in B_i$ conveys little information on $G(J_i)$. 
Note that Fact \ref{fact_marginal_indistinguishability} (and the Data Processing inequality (Fact \ref{fact_dp_ineq})) together imply that, 
for any block $B_i$ of bidders 
and any $u\in B_i$,
\begin{align}\label{eq_m_indep_sig_i}
I(\Mbiu;J_i \; | \; \cI_i)=0.
\end{align} 
(Note that in contrast, $I(M_{B_i};J_i \; | \;  \cI_i) \neq 0$. In fact, $J_i$ may be almost determined by the 
entire message of the $i$'th block (let alone by the entire message $M_1$ of $\pi$), as the induced distribution of $G(J_i)$ is different
than that of $G^i_{j}$, $j\neq J_i$.
This is where we crucially use the \emph{simultaneaty} of bidder's messages). 
We will also need the following proposition:

\begin{proposition}\label{prop_remove_j}
For any bidder $u\in B_i$ and any $j\in [\fs+1]$, it holds that $$I(\Mbiu ; G^i_j \; |  \cI_i, J_i = j) \leq I(\Mbiu ; G^u_j \; |  \cI_i).$$
\end{proposition}
\begin{proof}
Recall that $\Gamma_u = \{G^u_1,G^u_2,\ldots , G^u_{\fs+1}\}$ is the input of bidder $u$, and notice that for any $j\in [\fs+1]$,
$$(\Mbiu |\cI_i,J_i=j) \rightarrow (\Gamma_u|\cI_i,J_i=j) \rightarrow (G^u_j|\cI_i,J_i=j) \rightarrow (G^i_j|\cI_i,J_i=j)$$ 
is a Markov chain where the left chain holds since, conditioned on $(\Gamma_u,\cI_i,J_i=j)$, $\Mbiu$ is completely determined and therefore 
independent of $G^i_j$ and $G^u_j$, and 
the right chain holds since conditioned on $\cI_i,J_i=j$ and 
the graph $G^u_j$, the rest of the edges of the graph $G^i_j$ are independent of $\Gamma_u$ by construction.
Therefore, by the (general)
Data Processing inequality (Fact \ref{fact_dp_ineq_general}), we have 
\begin{align}\label{eq_remove_gij}
I(\Mbiu ; G^i_j \; |  \cI_i, J_i = j) \leq I(\Mbiu ; G^u_j \; |  \cI_i, J_i = j).
\end{align} 
Now, by Fact \ref{fact_marginal_indistinguishability}, we know that the distribution of $(\Gamma_u|\cI_i)$ is independent of the event $``J_i=j"$.
Since $\Mbiu$ and $G^u_j$ are deterministic functions of $\Gamma_u$ (conditioned on $\cI_i$), this also implies that the joint distribution 
of $(\Mbiu,G^u_j | \cI_i)$ is independent of the event $``J_i=j"$. Therefore, we conclude by \eqref{eq_remove_gij} that 
\begin{align*}
I(\Mbiu ; G^i_j \; |  \cI_i, J_i = j) \leq I(\Mbiu ; G^u_j \; |  \cI_i, J_i = j) = I(\Mbiu ; G^u_j \; |  \cI_i).
\end{align*} 
\end{proof}
We proceed to prove the Lemma. We may now write for any $u\in B_i$
\begin{align}
&I(\Mbiu ; \GiJ \; | J_i, \cI_i) 
 = \frac{1}{\fs+1}\cdot \sum_{j=1}^{\fs+1} I(\Mbiu ; G^i_j \; |  \cI_i, J_i = j)  \nonumber \\
& \text{(By definition of conditional mutual information and since $J_i \in_R [\fs+1]$ and by \eqref{eq_m_indep_sig_i})} \nonumber \\
& \leq \frac{1}{\fs+1}\cdot \sum_{j=1}^{\fs+1} I(\Mbiu ; G^u_j \; |  \cI_i)  \;\;\;\;\;\;\;\; \text{(By Proposition \ref{prop_remove_j})} \nonumber \\
& \leq \frac{1}{\fs+1}\cdot \sum_{j=1}^{\fs+1} I(\Mbiu ; G^u_j \; | G^u_1,G^u_2,\ldots, G^u_{j-1},   \cI_i) \label{eq_more_conditioning}\\
& = \frac{1}{\fs+1}\cdot I(\Mbiu ; G^u_1,G^u_2,\ldots, G^u_{\fs+1} | \cI_i)  \;\;\;\;\; \text{(by the chain rule)} \nonumber \\
&\leq \frac{H(\Mbiu)}{\fs+1} \;\;\;\;\;\;\;\;\;\;\;\;\; \text{(by Fact \ref{fact_mi_bounded_by_entropy})} \nonumber \\
& \leq \frac{|\Mbiu|}{\fs+1} \leq \frac{\ell}{ \fs+1} < \frac{1}{n^2_r} = \dr^2  \label{eq_low_info}
\end{align}
where the inequality in \eqref{eq_more_conditioning} follows from Lemma \ref{lem_cond_indep} taken with 
$A=G^u_j, B=\Mbiu,C=\cI_i, D=G^u_{<j}$, since $G^u_j$ is independent of $G^u_{<j}$ for all $j$, conditioned on $\mathcal{I}_i$.

Now, we claim that, for each bidder $u\in B_i$, conditioning on the previous  messages of the bidders 
($M^{<u}_{B_i} := M^{1}_{B_i} M^{2}_{B_i}  \ldots M^{u-1}_{B_i} $) can only \emph{decrease} the information $\Mbiu$ reveals on 
the hidden graph $G(J_i)$:
\begin{claim} \label{cl_cond_prev_messages_decrease_info}
$I(\Mbiu ; \GiJ \; | M^{< u}_{B_i} , J_i, \cI_i) \leq I(\Mbiu ; \GiJ \; | J_i, \cI_i).$
\end{claim}

\begin{proof}
By construction of $\mu_{r+1}$, conditioned on $G(J_i),\cI_i$ and $J_i$, the inputs of bidders $u$ and $B_i\setminus \{u\}$ are \emph{independent}.
In particular, this fact and the data processing  inequality (Fact \ref{fact_dp_ineq}) together imply that
\[ I(\Mbiu ; M^{< u}_{B_i} \; | \GiJ , J_i, \cI_i)  = 0,\]
since $\pi$ was assumed to be a deterministic protocol.
By non-negativity of information and the chain rule, 
\begin{align*}
&I(\Mbiu ; \GiJ \; | M^{< u}_{B_i} , J_i, \cI_i) \leq I(\Mbiu ; \GiJ , M^{< u}_{B_i} \; | J_i, \cI_i) \\
& = I(\Mbiu ; \GiJ \; | J_i, \cI_i) + I(\Mbiu ; M^{< u}_{B_i} \; | \GiJ , J_i, \cI_i) \\
& = I(\Mbiu ; \GiJ \; | J_i, \cI_i).
\end{align*}
\end{proof}

We conclude that
\begin{align} \label{eq_total_info_bound}
& \Ex{\mbi, j_i,\cI_i}\left[ \Div{\psir}{\mu_r} \right]  = I(\Mbi ; \GiJ \; | J_i, \cI_i)  \;\;\;\; (\text{by Definition \ref{def_MI_div} of conditional mutual information}) \nonumber \\
& = \sum_{u\in B_i} I(\Mbiu ; \GiJ \; | M^{< u}_{B_i} , J_i, \cI_i)  \;\;\;\;\;\; (\text{by the chain rule}) \nonumber \\
&\leq \sum_{u\in B_i} I(\Mbiu ; \GiJ \; | J_i, \cI_i)  \;\;\;\;\; \text{(by Claim \ref{cl_cond_prev_messages_decrease_info})} \nonumber \\
& \leq |B_i|\cdot \dr^2  \;\;\;\; \text{(by \eqref{eq_low_info})} \nonumber \\
& = n_r \cdot \dr^2 = \dr.
\end{align}

Combining \eqref{eq_total_info_bound}, Pinsker's inequality (Lemma \ref{lemma:pinsker}) and convexity of $\sqrt{\cdot}$ completes the entire proof of the lemma.

\end{proof}

\vspace{.1in}
We are now ready to prove Theorem \ref{thm_main_lb}. To this end, for any $r$-round protocol $\pi$, input graph 
$G$, and induced subgraph $H\subseteq G$,  
let $$N_{\pi}(G,H) := |\hat{\cM}(\Pi(G)) \cap E(H)|$$ 
denote the size of the matching computed from $\pi$'s transcript with 
respect to the 
subgraph $H$ (note that $N_{\pi}(G,H)$ is a random variable depending on $G$). 
For notational convenience, we use the shorthand $N_{\pi}(G) := N_{\pi}(G,G)$. 
Theorem \ref{thm_main_lb} will follow directly from the following theorem:

\vspace{.1in}

\begin{theorem}\label{thm_main_lower_bound}
Let $\pi$ be an $r$-round (deterministic) communication protocol with bandwidth $\ell$. 
Then
\[ \Ex{G\sim \mu_{r}} [N_{\pi}(G)] \leq \tr .\]
\end{theorem}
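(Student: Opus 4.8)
The plan is to prove the bound $\Ex{G\sim\mu_r}[N_\pi(G)] \le \tr$ by induction on $r$. The base case $r=0$ is easy: with zero rounds of communication the referee sees an empty transcript and must output a fixed matching $\hat{\cM}$ independent of $G$; since $G^0$ is a uniformly random permutation matching on $\ell^5$ vertices, the expected overlap $\Ex{\sigma}[|\hat{\cM}\cap E(G^0)|]$ is at most $1$, which matches the claimed bound $\tr|_{r=0} = 1$ (here the empty sum is zero). For the inductive step, assume the bound for $r$-round protocols and let $\pi$ be an $(r+1)$-round protocol with bandwidth $\ell$ on inputs drawn from $\mu_{r+1}$. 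The key observation is that since $\mu_{r+1}$ has a perfect matching of size $n_{r+1} = n_r^5$ partitioned across the $n_r^4$ blocks, any matching output by $\pi$ that is large must be large on many of the hidden subgraphs $G(J_i)$; more precisely, $N_\pi(G) \le \sum_{i=1}^{n_r^4} N_\pi(G, G(J_i)) + (\text{edges to fooling blocks})$, and the crux will be to control $\sum_i \Ex{}[N_\pi(G,G(J_i))]$. Fix a block $B_i$. Conditioned on the first message $M_{B_i}$, the index $J_i$, and the partition $\cI_i$, the hidden graph $G(J_i)$ is distributed as $\psir$, which by Lemma \ref{cl_psi_delta_close} is, in expectation, at $\ell_1$-distance at most $\dr^{1/2}$ from $\mu_r$.

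The heart of the argument is the embedding/round-elimination step (referred to as Claim \ref{claim_round_elim} in the text): I would show that an $(r+1)$-round protocol $\pi$ on $\mu_{r+1}$, when restricted to what it does on the hidden graph $G(J_i)$ from round $2$ onwards, can be simulated by an $r$-round protocol $\pi'$ on $\mu_r$ (with the same bandwidth $\ell$). The $n_r$ players of $\pi'$, holding an input $H\sim\mu_r$, use public randomness to sample the block structure, the fooling-block indices $\cI_i$, and $J_i$; they embed $H$ as the $B_i$–$T_{\sigma(i)}$ subgraph; using private randomness, each player independently samples her own edges to the fooling blocks $T_{a_1},\dots,T_{a_{\fs}}$ — this is possible \emph{without communication} precisely because, by property (iii) of Remark \ref{rem_properties}, those edges are drawn from the product of marginals $\bigtimes_u(\mu_r|u)$ and hence are independent across players given $\cI_i$; and the remaining blocks $B_{i'}$, $i'\ne i$, together with their hidden graphs are sampled from public randomness. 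The players then internally simulate round $1$ of $\pi$ (each can compute her own first message from her full input), write the round-$1$ transcript using public randomness for the other blocks' messages, and thereafter run rounds $2,\dots,r+1$ of $\pi$ honestly — a total of $r$ real communication rounds. The subtlety, and the step I expect to be the main obstacle, is that conditioning on the full first-round message $M_1$ (not just $M_{B_i}$) correlates the private inputs of the $B_i$-players with the fooling-block inputs in a way that cannot be sampled without communication; the resolution is to condition only on $M_{B_i}$ and to invoke the conditional-independence structure of $\mu_{r+1}$ (this is where Lemma \ref{prop_m_1_restricted_decomposition}, referenced in the overview, is used) so that the correlations remain "local" to each block, letting the public/private randomness split above go through. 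One must also track the $\ell_1$-error from replacing $(G(J_i)\mid M_{B_i},J_i,\cI_i)$ by $\mu_r$, which Fact \ref{fact_expectation_l_1} converts into an additive loss of at most $\dr^{1/2}\cdot n_r$ per block in the matching size (since $N_\pi$ is bounded by $n_r$ on any single hidden subgraph).

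Putting it together: by the inductive hypothesis applied to the simulating protocol $\pi'$, for each $i$ we get $\Ex{}[N_\pi(G,G(J_i))] \le \tr + \dr^{1/2}\cdot n_r$ (the second term the embedding error), and there are $n_r^4$ blocks, while the edges to fooling blocks can contribute at most a further $O(n_r^{4})\cdot(\text{something small})$ to the total matching — in fact, since each fooling block $T_{a_j}$ has $m_r$ items and is shared, a crude bound on the matched fooling edges suffices. Summing and simplifying using $n_{r+1} = n_r^5$, $\dr = 1/n_r$, and $\ell = n_r^{1/5^{r+1}}$, the dominant term is $n_r^4 \cdot 5n_r \cdot(\sum_{k=0}^{r-1}\Delta_k^{1/2}) + n_r^4\cdot n_r\cdot\Delta_r^{1/2} + n_r^4$, which is at most $5n_{r+1}\cdot(\sum_{k=0}^{r}\Delta_k^{1/2}) + 1 = \tr|_{r+1}$ after collecting terms, completing the induction. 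Finally, Theorem \ref{thm_main_lb} follows by plugging $\ell = n_r^{1/5^{r+1}}$, so $\sum_{k=0}^{r-1}\Delta_k^{1/2} \le 2\Delta_{r-1}^{1/2} = 2 n_{r-1}^{-1/2}$ is negligible relative to the main $n_r^{1-1/5^{r+1}}$ scaling, giving expected matching size $\le 5n_r^{1-1/5^{r+1}}$ and hence approximation ratio $\Omega(n^{1/5^{r+1}})$.
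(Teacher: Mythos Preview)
Your overall architecture---induction on $r$, decomposition into fooling-block edges plus $\sum_i N_\pi(G,G(J_i))$, the $\ell_1$-replacement of $\psi^i_r$ by $\mu_r$ via Lemma~\ref{cl_psi_delta_close}, and a round-elimination embedding---matches the paper. But the embedding step as you describe it has a genuine gap that prevents it from saving a round.

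You propose: (a) publicly sample $\cI_i,J_i$ and the inputs of the other blocks; (b) each player $u\in B_i$ privately samples her fooling-block edges $G^u_T$ from the \emph{unconditional} product $\bigtimes_u(\mu_r|u)$; (c) each player then ``internally'' computes her own first-round message $M^u_{B_i}$ from her now-complete input; (d) run rounds $2,\dots,r+1$ of $\pi$. The problem is step (c)/(d): the message $M^u_{B_i}$ depends on the \emph{private} coins used to sample $G^u_T$, so no other player in $B_i$ knows $M^u_{B_i}$. Hence the full first-round transcript $M_1$ is \emph{not} common knowledge after your ``internal simulation,'' and players cannot compute their round-$2$ messages in $\pi$ (which depend on all of $M_1$). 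Making $M_{B_i}$ common knowledge costs a real communication round, and you are back to $r+1$ rounds.

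The paper's embedding reverses the order: it first samples $(M_1,\J,\cI)$ from \emph{public} randomness (so $M_1$ is common knowledge for free), and only then has each player $u\in B_1$ privately sample her fooling edges from the \emph{conditional} law $(G^u_T\mid M_1, H_u,\J,\cI)$. This is exactly where Lemma~\ref{prop_m_1_restricted_decomposition}(1) is used---not merely to say correlations are ``local,'' but to guarantee that conditioned on $(M_1,G(J_1),\J,\cI)$ the variables $G^u_T$ are independent across $u$ and each depends only on $(M_1,G^u_{J_1},\J,\cI)$, so player $u$ can do her private sampling knowing only $M_1$ (public) and $H_u$ (her input). Part~(2) of that lemma handles the other blocks analogously. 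With $M_1$ already fixed publicly, the players then run the $r$-round protocol $\pi|m_1$ directly. Your text invokes the right lemma, but your procedure samples $G^u_T$ unconditionally and tries to \emph{derive} $M_1$ afterwards, which is precisely the order that fails.

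A minor secondary point: in your final bookkeeping you never pin down the fooling-edge contribution. The paper bounds it by the total number of items in fooling blocks, $\ell\cdot n_r^2\cdot m_r$, and absorbs $\ell\cdot n_r^2\cdot m_r + n_r^4$ into $4n_r^5\Delta_r^{1/2}$ using $m_r\le n_r^2$ and $\ell\le n_r^{1/5}$; you should make this explicit rather than leaving it as ``$O(n_r^4)\cdot(\text{something small})$.''
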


\vspace{.1in} 

\begin{proof}
We prove the theorem by induction on $r$. 
Let us denote $$t(r):=\tr. $$ 
For $r=0$ (namely, with no communication at all), 
the expected number of edges the referee guesses correctly under $\mu_0$ is at most 
$ n_0\cdot\frac1 {n_0} = 1 =  t(0)$ (as $G^0\sim \mu_0$ is a random permutation on $[n_0]$).

Suppose the theorem statement holds for all integers up to $r$. Thus, the expected matching produced by 
any $r$-round protocol $\theta$ (with bandwidth $\leq \ell$) under $\mu_r$ satisfies
\begin{align}\label{eq_induction_hyp}
\Ex{G\sim \mu_{r}} [N_{\theta}(G)] \leq t(r) . 
\end{align}
We need to show that the expected matching produced by any $(r+1)$-round protocol $\pi$ (with bandwidth $\leq \ell$) under $\mu_{r+1}$ satisfies
\begin{align}\label{eq_induction_hyp}
\Ex{G\sim \mu_{r+1}} [N_{\pi}(G)] \leq t(r+1) . 
\end{align}

Let $\pi$ be an $(r+1)$-round protocol.
Recall that $G\sim \mu_{r+1}$ consists of $n_r^4$ ``blocks" $B_i$ of bidders, each of which is connected to exactly $|\cI_i|=\fs+1$ item blocks. 
Let $M_1 := M_{B_1} M_{B_2}\ldots M_{B_{n_r^4}}$ denote the 
messages sent by each \emph{block} of bidders in the first round of $\pi$ (where $\Mbi = M^1_{B_i}, M^2_{B_i},\ldots , M^{n_r}_{B_i}$ is the concatenated 
message of all bidders $u\in B_i$). 
Recall that for every $i\in [n^4_{r}]$, 
$G(J_i)$ denotes the induced subgraph of $G$ on $(B_i,T_{\tau_i^{-1}(J_i)})$, and that for every bidder $u\in B_i$,
$G^u_{J_i} = (u,T_{\tau_i^{-1}(J_i)})$ denotes the induced subgraph between bidder $u$ and the ``hidden graph" of the $i$'th block to which $u$ belongs. 
In the same spirit, for every block $B_i$ and every bidder $u\in B_i$, let 
$$G(T_i) :=\left(B_i, \bigcup_{j=1}^{\fs} T_{a_j}\right) \;\;\;\;\; , \;\;\;\;\; \Gtu :=\left(u, \bigcup_{j=1}^{\fs} T_{a_j}\right)$$
denote the induced subgraph on the block $B_i$ (on the bidder $u\in B_i$) and all ``fooling blocks" respectively.
As usual, for any subset $S\subseteq [n^4_r]$,  
we write $G(T_S) :=\left(\bigcup_{i\in S} B_i, \bigcup_{j=1}^{\fs} T_{a_j}\right)$ and use the convention $\T := T_{[n^4_r]}$. 
In what follows, $G(\J) := G(J_1)G(J_2)\ldots G(J_{n^4_r})$ denotes the (concatenation of the) ``hidden" graphs.  
The following proposition will be essential for the rest of our argument:

\begin{lemma}[Conditional Subgraph Decomposition] \label{prop_m_1_restricted_decomposition}
The following conditions hold: 
\begin{enumerate} 
\item  $ ((G^1_T, G^2_T, \ldots , G^{n_r}_T) \; | \; M_1, G(J_1), \J, \cI) \; \sim \;  \bigtimes_{u\in B_1} (\Gtu | M_1, \Gju, \J, \cI)$, \\
where $\{1,2,\ldots , n_r\}$ are the bidders of the first block $B_1$.
\item 
$\left(G(\J),G(\T) \; | \;  M_1,\J,\cI \right) \sim \bigtimes_{i\in [n_r^4]} (G(J_i)G(T_i) \; |\; \Mbi,\J, \cI).$
\end{enumerate}
That is, the joint distribution of the ``fooling subgraphs" $G^u_T$ of each bidder $u\in B_1$  conditioned on the entire message $M_1$
and the ``hidden graph" $G(J_1)$ of the first block,  is a product of the marginal distributions $G^u_T$ conditioned only on the 
``local hidden part" $\Gju$ of each bidder and 
$M_1$.

Furthermore, the joint distribution of the subgraphs induced on each block ($G(J_i)G(T_i))$  conditioned on the entire message $M_1$ 
is a product distribution of the marginal distributions of the $i$'th block, conditioned only on the ``local" message of the respective block 
$M_{B_i}$ (In particular, these graphs remain independent even conditioned on $M_1$).
\end{lemma}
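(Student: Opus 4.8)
The plan is to prove both decomposition statements directly from the recursive construction of $\mu_{r+1}$, exploiting the key structural feature noted in Remark \ref{rem_properties}(iii): within each block $B_i$, the edges from bidders to the ``fooling blocks'' $T_{a_j}$ are drawn \emph{independently across bidders} (each bidder picks $d_r$ random items in each fooling block on her own), whereas only the edges to the hidden block $T_{\sigma(i)}$ are correlated across $B_i$ (via a shared copy of $\mu_r$). Since $\pi$ is deterministic, each message $M^u_{B_i}$ is a function of that bidder's input $\Gamma_u$ alone, so conditioning on $M_1$ does not introduce any cross-bidder or cross-block coupling beyond what already exists through the hidden graphs. The two items are really the same phenomenon applied at two granularities: item 2 says blocks are mutually independent given $(M_1,\mathbf{J},\cI)$, and item 1 says bidders \emph{within} a block are mutually independent once we also condition on that block's hidden graph $G(J_1)$.

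First I would establish item 2. Condition on $(\mathbf{J},\cI)$, which fixes, for every $i$, the identity of the hidden item block $T_{\sigma(i)}$ and the fooling blocks $T_{a_1},\dots,T_{a_{\fs}}$ (note $\cI$ determines the $a_j$'s and each $\sigma(i)$). Under this conditioning the construction draws, for each $i$ independently, (a) the hidden edges $G(J_i)$ from an independent copy of $\mu_r$ on $(B_i,T_{\sigma(i)})$, and (b) the fooling edges $G(T_i)$ using fresh independent randomness per bidder. Hence the tuples $\{(G(J_i),G(T_i))\}_{i\in[n_r^4]}$ are mutually independent given $(\mathbf{J},\cI)$. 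Now $M_1=(M_{B_1},\dots,M_{B_{n_r^4}})$ where $M_{B_i}$ is a deterministic function of the inputs of the bidders in $B_i$, i.e.\ of $(G(J_i),G(T_i))$ together with the (fixed) block index data. Since a product distribution stays a product after each coordinate is independently post-processed and conditioned on, conditioning on $M_1$ preserves the factorization and moreover $(G(J_i),G(T_i))$ becomes independent of $M_{B_{i'}}$ for $i'\neq i$; this is exactly the statement $(G(\J),G(\T)\mid M_1,\J,\cI)\sim\bigtimes_{i}(G(J_i)G(T_i)\mid M_{B_i},\J,\cI)$. I would phrase this cleanly using Lemma \ref{lem_cond_decreases_info} / the data-processing inequality, or simply by writing out the joint density and checking it factors.

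For item 1 I would repeat the argument one level down, inside the first block $B_1$. Fix $(\mathbf{J},\cI)$ and additionally condition on the hidden graph $G(J_1)=(B_1,T_{\sigma(1)})$. Given all of this, the only remaining randomness in the inputs of bidders $u\in B_1$ is their edges to the fooling blocks $\bigcup_j T_{a_j}$, and by construction these are drawn independently across the bidders $u$. So $\bigtimes_{u\in B_1}(G^u_T\mid \cdots)$ holds before conditioning on $M_1$. Again $M_1$ is, restricted to block $B_1$, a tuple of per-bidder deterministic functions $M^u_{B_1}=f_u(G^u_{J_1},G^u_T)$, and the relevant messages of the other blocks depend only on the other blocks' inputs, which are independent of $B_1$'s fooling edges given $(G(J_1),\mathbf{J},\cI)$. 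Post-processing and conditioning each coordinate of a product distribution independently keeps it a product, yielding $((G^1_T,\dots,G^{n_r}_T)\mid M_1,G(J_1),\J,\cI)\sim\bigtimes_{u\in B_1}(G^u_T\mid M_1,G^u_{J_1},\J,\cI)$.

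The main subtlety — and the step I would be most careful with — is bookkeeping about \emph{what conditioning on $M_1$ does to correlations}. The naive worry is that $M_1$ contains messages from \emph{all} bidders, so conditioning on it could couple the fooling edges of bidder $u\in B_1$ with those of another bidder $u'\in B_1$ (through $G^{u'}_{J_1}$, which is correlated with $G^u_{J_1}$ via the shared $\mu_r$-copy). The construction is designed precisely so this does not happen: in item 1 we have \emph{also} conditioned on the entire $G(J_1)$, which screens off all hidden-graph correlations within $B_1$, leaving only the per-bidder-independent fooling edges; and $M^u_{B_1}=f_u(G^u_{J_1},G^u_T)$ depends on $B_1$'s hidden data only through $u$'s own slice $G^u_{J_1}$, which is fixed. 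So the clean way to present the proof is: (i) identify the minimal sigma-algebra to condition on that renders the relevant inputs a genuine product (namely $(\mathbf{J},\cI)$ for item 2, and $(\mathbf{J},\cI,G(J_1))$ for item 1); (ii) observe $M_1$ decomposes as an independent deterministic function of each coordinate of that product; (iii) invoke the elementary fact that conditioning a product distribution on a tuple of per-coordinate functions yields a product distribution. I would state (iii) as a one-line lemma to keep the write-up tidy.
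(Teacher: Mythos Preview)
Your proposal is correct and captures the same structural insight the paper uses: the key is that, once one conditions on $(\J,\cI)$ (and additionally on $G(J_1)$ for item~1), the relevant inputs form a genuine product across blocks (resp.\ across bidders in $B_1$), and $M_1$ decomposes as a tuple of per-coordinate deterministic functions of that product, so conditioning on $M_1$ preserves the factorization. The paper, however, packages this differently: rather than stating and invoking the ``product conditioned on per-coordinate functions stays a product'' fact, it proves the equivalent vanishing-information statements $I(\Gtu;\Gtmu\Gjomu\mid M_1,\J,\cI,\Gju)=0$ and $I(G(J_i)G(T_i);G(J_{-i})G(T_{-i})M_{B_{-i}}\mid M_{B_i},\J,\cI)=0$ directly, by peeling off $M^{-u}_1$ and then $M^u_1$ (resp.\ $M_{B_i}$) from the conditioning via Lemma~\ref{lem_cond_decreases_info}, using that each peeled message is determined by the remaining conditioned variables. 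Your route is a bit more elementary and transparent about \emph{why} the construction was designed this way; the paper's route stays within the information-theoretic idiom used throughout Section~\ref{sec_main_lb} and avoids introducing an auxiliary lemma. One small point to make explicit in your write-up: the right-hand side of item~1 conditions each factor on $\Gju$ rather than all of $G(J_1)$, so after your product argument yields $(\Gtu\mid M_1,G(J_1),\J,\cI)$ you should note that this marginal equals $(\Gtu\mid M_1,\Gju,\J,\cI)$, which follows because $\Gtu\perp G^{-u}_{J_1}\mid(\J,\cI,\Gju)$ and $M_1^{-u}$ is a function of variables independent of $\Gtu$ under that conditioning.
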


The intuition behind the second proposition is clear: Since in the original distribution $\mu_{r+1}$, the graphs of each block are independent 
by construction, this remains true even when conditioned on the first (deterministic) message of each block. The first proposition is more
subtle, since within the same block (say $B_1$), the inputs of the bidders $u\in B_1$ are correlated (via the hidden graph $G(J_1)$).
However, conditioned on knowing the hidden block ($\J,\cI$), the marginal distribution of  the
fooling graph $\Gtu$ is \emph{independent} for each $u$ by construction, and therefore the only correlation between $G(J_1)$ and
 $\Gtu$ created by conditioning on the message $M_1$, is correlation between the ``local hidden graph" of bidder u ($\Gju$)
and $\Gtu$. We remark that this fact will be used crucially in the embedding argument below (Claim \ref{claim_round_elim}).
We proceed to the formal proof.

\begin{proof}[Proof of Lemma \ref{prop_m_1_restricted_decomposition}]
We repeatedly use Lemma \ref{lem_cond_decreases_info}.

\paragraph{\bf Proof of (1) \rm}
It suffices to show that for every $u\in B_1$, $I(\Gtu ; \Gtmu \Gjomu | M_1, \J, \cI, \Gju) = 0$.
To this end, observe that 
\begin{align} \label{eq_decomp_1}
I(\Gtu ; M^{-u}_1 | M^u_1, G(J_1), \Gtmu , \J, \cI) \leq H(M^{-u}_1 | G(J_1), \Gtmu , \J, \cI)  = 0, 
\end{align}
since the message $M^{-u}_1$ of all bidders in $B_1$ except bidder $u$ is fully determined by the inputs $(G(J_1), \Gtmu)$. 
For the same reason,
\begin{align} \label{eq_decomp_2}
I(\Gtmu \Gjomu ; M^u_1| \Gju, \Gtu, \J, \cI) \leq  H(M^u_1| \Gju, \Gtu, \J, \cI) = 0. 
\end{align}
Therefore,
\begin{align*}
& I(\Gtu;\Gtmu \Gjomu | M_1, \J, \cI, \Gju) = I(\Gtu;\Gtmu \Gjomu | M^u_1, M^{-u}_1, \J, \cI, \Gju)  \\
 \leq \; &  I(\Gtu;\Gtmu \Gjomu | M^u_1, \J, \cI, \Gju) \;\;\;\; \text{(By Lemma \ref{lem_cond_decreases_info} with $D=M^{-u}_1$, and \eqref{eq_decomp_1})} \\ 
 \leq \; & I(\Gtu;\Gtmu \Gjomu | \J, \cI, \Gju) \;\;\;\;\;\;\; \text{(By Lemma \ref{lem_cond_decreases_info} with $D=M^{u}_1$, and \eqref{eq_decomp_2})} \\
 = \; & 0, \;\;\;\; \text{as desired.}
\end{align*}

\paragraph{\bf Proof of (2) \rm}
It suffices to show $I(G(J_i)G(T_i) ; G(J_{-i})G(T_{-i}) M_{B_{-i}} | M_{B_i}, \J, \cI) = 0$.
Once again, applying Lemma \ref{lem_cond_decreases_info} with $D=M_{B_i}$, we have
\begin{align} \label{eq_decomp_3}
I(G(J_i)G(T_i) ; G(J_{-i})G(T_{-i})  | M_{B_i}, \J, \cI) \leq I(G(J_i)G(T_i) ; G(J_{-i})G(T_{-i})  | \J, \cI)
\end{align}
since $I( M_{B_i} ; G(J_{-i})G(T_{-i}) | G(J_i), G(T_i), \J, \cI) \leq H( M_{B_i} | G(J_i),G(T_i), \J, \cI ) = 0$ where
the last equality is because $\Mbi$ is determined by the input of block $B_i$.
The same argument implies 
\begin{align} \label{eq_decomp_4}
I(G(J_i)G(T_i) ; M_{B_{-i}} | M_{B_i}, \J, \cI, G(J_{-i})G(T_{-i})) \leq I(G(J_i)G(T_i) ; M_{B_{-i}} | \J, \cI, G(J_{-i})G(T_{-i}))
\end{align}
since once again, $I(M_{B_i} ; M_{B_{-i}} | G(\J),G(\T), \J, \cI) \leq H(M_{B_i} | G(\J),G(\T), \J, \cI) = 0$.
Combining equations \eqref{eq_decomp_3} and \eqref{eq_decomp_4}, we conclude by the chain rule that 
\begin{align} \label{eq_decomp_5}
& I(G(J_i)G(T_i) ; G(J_{-i})G(T_{-i}) M_{B_{-i}} | M_{B_i}, \J, \cI) \nonumber \\
= \; & I(G(J_i)G(T_i) ; G(J_{-i})G(T_{-i})  | M_{B_i}, \J, \cI) + I(G(J_i)G(T_i) ; M_{B_{-i}} | M_{B_i}, \J, \cI, G(J_{-i})G(T_{-i})) \nonumber \\
\leq \; & I(G(J_i)G(T_i) ; G(J_{-i})G(T_{-i})  | \J, \cI) + I(G(J_i)G(T_i) ; M_{B_{-i}} | \J, \cI, G(J_{-i})G(T_{-i}) ) \nonumber \\ 
= \; & 0,
\end{align}
where the last transition follows from the definition of $\mu_{r+1}$, and 
since $M_{B_{-i}}$ is a deterministic function of $G(J_{-i})G(T_{-i})$ conditioned on $\J,\cI$.

\end{proof}

\vspace{.2in}

We now proceed to prove \eqref{eq_induction_hyp}, the inductive step of the proof. 
Recall that $\pi$ is assumed to be deterministic, but $M_1$ is still
a random variable (under the input distribution $\mu_{r+1}$). Hence, we may 
equivalently draw $G\sim \mu_{r+1}$ by first sampling the first message $m_1\sim M_1$, and then sampling $G \sim \mu_{r+1}|m_1$. 
Let us denote by $\pi|m_1$ the protocol which is the subtree of $\pi$ conditioned on the first message being $m_1$. 
Note that $\pi|m_1$ has only $r$ rounds of communication. We therefore have

\begin{align}\label{eq_main_1}
& \Ex{G\sim \mu_{r+1}} [N_{\pi}(G)] \leq \Ex{\substack{m_1 \\ \J, \cI}}\;\Ex{\substack{G|m_1,  \J, \cI}} \left[ \ell\cdot n_r^2 \cdot m_r + 
\sum_{i=1}^{n_r^4} N_{\pi|m_1}(G, G(J_i))  \right]  
\end{align}
since any matching in $G$ can at most match all of the items in $\bigcup_{j=1}^{\fs} T_{a_j}$ and each block $T_{a_j}$ contains $m_r$ edges by definition 
of $\mu_{r+1}$, and the rest of the matched edges are contained in $G(J_1), G(J_2), \ldots , G(J_{n_r^4})$. 
Recall that by definition of $\mu_{r+1}$, $G(J_i)\sim \mu_r$.
Hence by linearity of expectation and  the second proposition of Lemma \ref{prop_m_1_restricted_decomposition},  
we may equivalently write the above as
\begin{align}
& = \ell\cdot n_r^2 \cdot m_r +   \sum_{i=1}^{n_r^4} \Ex{\substack{m_1 \\ \J, \cI}} \; 
\Ex{\substack{G(J_i)|(m_{B_i},\J, \cI) \\ G|\; (G(J_i),m_1,\J, \cI)}} \left[N_{\pi|m_1}(G, G(J_i)) \right] \nonumber \\
& = \ell\cdot n_r^2 \cdot m_r +  \sum_{i=1}^{n_r^4} \Ex{\substack{m_1 \\ \J, \cI}} \Ex{\substack{G(J_i)\sim \psir \\ G|\; (G(J_i),m_1,\J, \cI)}} 
\left[ N_{\pi|m_1}(G, G(J_i))  \right], \label{eq_main_2}
\end{align}
by definition of $\psir$ (actually, $\psir$ is defined conditioned only on $J_i,\cI_i$ but conditioning on all indices $\J,\cI$ clearly doesn't change the distribution).
By symmetry of the distribution $\mu_{r+1}$,
it suffices to upper bound the first term in the above summation
$$ \Ex{\substack{m_1 \\ \J, \cI}} \Ex{\substack{G(J_1)\sim \psi^1_r \\ G|\; (G(J_1),m_1,\J, \cI)}} \left[ N_{\pi|m_1}(G, G(J_1))  \right]. $$
To this end, we have 
\begin{align}  
& \Ex{\substack{m_1 \\ \J, \cI}} \Ex{\substack{G(J_1)\sim \psi^1_r \\ G|\; (G(J_1),m_1,\J, \cI)}} \left[ N_{\pi|m_1}(G, G(J_1))  \right]  \nonumber \\
& \leq \Ex{\substack{m_1 \\ \J, \cI}}\left[|\psi^1_r - \mu_r|\right]\cdot n_r 
+ \Ex{\substack{m_1 \\ \J, \cI}} \;\Ex{\substack{G(J_1)\sim \mu_r \\ G|\; (G(J_1),m_1,\J, \cI)}} \left[ N_{\pi|m_1}(G, G(J_1))\right]   \label{eq_main_3} \\
& \leq \dr^{1/2}\cdot n_r 
+ \Ex{\substack{m_1 \\ \J, \cI}} \Ex{\substack{G(J_1)\sim \mu_r \\ G|\; (G(J_1),m_1,\J , \cI)}} \left[ N_{\pi|m_1}(G, G(J_1))  \right]    \label{eq_main_4}
\end{align}
where \eqref{eq_main_3} follows from fact \ref{fact_expectation_l_1}, since 
trivially $N_{\pi|m_1}(G,G(J_1) \leq |U^r| = n_r$ for any $G(J_1)$, 
and the last transition \eqref{eq_main_4}
follows from Lemma \ref{cl_psi_delta_close}.  
We now wish to use the inductive hypothesis to argue that the rightmost term of \eqref{eq_main_4} cannot exceed the expected matching size of an $r$-round
protocol over $\mu_r$. We do so using an embedding argument, which is the heart of the proof.

\begin{claim}[$r$-round Embedding]\label{claim_round_elim}
$$\Ex{\substack{m_1 \\ \J, \cI}} \Ex{\substack{G(J_1)\sim \mu_r \\ G|\; (G(J_1),m_1,\J, \cI)}} \left[ N_{\pi|m_1}(G, G(J_1))  \right]  \leq t(r).$$
\end{claim}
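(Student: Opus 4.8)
The plan is to reduce the left-hand side of \claimref{claim_round_elim} to the inductive hypothesis \eqref{eq_induction_hyp} applied to a cleverly constructed $r$-round protocol $\theta$ over the distribution $\mu_r$. The key point is that on the right-hand side we have $G(J_1)\sim\mu_r$ as a genuine independent copy of the hard distribution, and we want to argue that the protocol $\pi|m_1$, restricted to what it matches inside $G(J_1)$, is essentially simulating an $r$-round protocol on this smaller instance. The obstacle, foreshadowed in the text, is that $\pi|m_1$ is an $n_{r+1}$-player protocol operating on a large graph $G\sim(\mu_{r+1}|m_1)$, whereas $\mu_r$ has only $n_r$ players; so to invoke the inductive hypothesis I must show the $n_r$ players of a $\mu_r$-instance can, \emph{without any communication}, extend their private inputs $H\sim\mu_r$ into a full sample of $(G\mid G(J_1)=H, m_1, \J,\cI)$, feed it into $\pi|m_1$, and run it for its remaining $r$ rounds.

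Here is how I would carry it out. First, fix (via public randomness) a choice of $m_1$, of all the block indices $\J,\cI$, and identify block $B_1$ and its hidden-block position $J_1$ — this is legitimate since these are all public under $\mu_{r+1}$ (recall $\cT,\cB$ are public, and $m_1$ is a transcript). Second, the $n_r$ players of the $\mu_r$-instance are identified with the $n_r$ bidders of $B_1$, and each player's input $H^u\sim(\mu_r\mid u)$ becomes $\Gju$, the ``local hidden graph''. The crucial step is sampling the rest of $G$: this is exactly where Lemma \ref{prop_m_1_restricted_decomposition}(1) is used. By part (1), conditioned on $(m_1, G(J_1)=H, \J,\cI)$, the fooling subgraphs $(\Gtu)_{u\in B_1}$ factor as a product over $u\in B_1$ of $(\Gtu\mid m_1, \Gju, \J,\cI)$ — so each bidder $u$ can sample \emph{her own} fooling edges $\Gtu$ using \emph{private} randomness, since this distribution depends only on her own local input $\Gju$ (and the public data $m_1,\J,\cI$). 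Then the remaining blocks $B_i$, $i\geq 2$, together with their hidden and fooling graphs, can be sampled using public randomness from $\bigtimes_{i\geq 2}(G(J_i)G(T_i)\mid M_{B_i},\J,\cI)$ — this is Lemma \ref{prop_m_1_restricted_decomposition}(2), which guarantees the blocks stay independent conditioned on $m_1$, so no player needs information she doesn't have. This produces a sample of $G\sim(\mu_{r+1}\mid m_1, G(J_1)=H,\J,\cI)$ with the correct joint law.

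Third, define the $r$-round protocol $\theta$ on input $H\sim\mu_r$: the players perform the above sampling with no communication, then simulate $\pi|m_1$ for its $r$ remaining rounds on the assembled graph $G$ (messages in $\pi|m_1$ are consistent because each bidder $u$ holds her full input $\Gamma_u$ to $\pi$). The referee of $\theta$ outputs $\hat\cM(\Pi(G))\cap E(G(J_1))$ — that is, it keeps only the part of $\pi$'s output that lands inside the $\mu_r$-instance $H=G(J_1)$. By construction, $N_\theta(H) = N_{\pi|m_1}(G, G(J_1))$ in distribution (averaged over the public and private randomness used for sampling), and the referee's output is always a legal matching in $\cF(n_r,m_r)$ since it is a sub-matching of a legal matching restricted to $E(H)$. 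Taking expectations over $m_1,\J,\cI$ and over $H\sim\mu_r$, the left-hand side of the claim equals $\Ex{H\sim\mu_r}[N_\theta(H)]$, which by the inductive hypothesis \eqref{eq_induction_hyp} is at most $t(r)$. The main obstacle, as anticipated, is verifying carefully that the sampling in step two is both \emph{no-communication} (each piece depends only on data available to the party sampling it — public randomness for shared data, private randomness for $\Gtu$, this being precisely what part (1) of Lemma \ref{prop_m_1_restricted_decomposition} buys us) and \emph{distributionally exact} (the assembled $G$ has law $(\mu_{r+1}\mid m_1, G(J_1)=H,\J,\cI)$, which requires combining parts (1) and (2) with the defining independence structure of $\mu_{r+1}$); everything else is bookkeeping.
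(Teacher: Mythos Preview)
Your proposal is correct and follows essentially the same approach as the paper's proof: both construct an $r$-round protocol for $\mu_r$ by having the $n_r$ players publicly sample $(m_1,\J,\cI)$, embed $H$ into block $B_1$, privately sample each $\Gtu$ via part~(1) of \lemmaref{prop_m_1_restricted_decomposition}, publicly sample the remaining blocks via part~(2), and then simulate $\pi|m_1$. The only minor point you elide is the explicit application of the averaging principle to pass from the randomized protocol $\theta$ to a deterministic one before invoking the inductive hypothesis (which is stated for deterministic protocols), but this is routine and the paper handles it in one sentence.
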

\begin{proof}
Notice that the protocol $\pi|m_1$ is defined over inputs from $\mu_{r+1}$ and not $\mu_r$, so we cannot apply the inductive hypothesis
directly to obtain our desired upper bound. Instead, we will ``embed" $H\sim \mu_r$ into $\pi|m_1$ by simulating the rest of the players using 
public randomness (and then fix the public coins to obtain a deterministic protocol). To this end, for every bidder $u\in U^r$, denote by $H_u$
the induced subgraph of $H$ on the vertex $u$ (i.e., the input of bidder $u$ in $\mu_r$).

Consider the following $r$-round randomized protocol $\tau$ for $H\sim \mu_r$: 
The $n_r$ players  use the shared random tape to sample $(M_1, \J, \cI)$ (according to the probability space of $\pi$). 
Then, they ``embed" their inputs (the graph $H$) to the first block $B_1$
and each bidder $u\in B_1$ ``completes" his missing edges to the ``fooling graph" $\Gtu$ according to $(\Gtu | M_1, H_u, \J, \cI)$ 
\emph{using private randomness}. Note that this is possible due to the first proposition of Lemma  \ref{prop_m_1_restricted_decomposition},
since it asserts that $(\Gtu | M_1, H, \J, \cI) \sim (\Gtu | M_1, H_u, \J, \cI)$.
The players now use the second proposition of Lemma  \ref{prop_m_1_restricted_decomposition} to sample the graphs of the rest of the blocks 
according to $\bigtimes_{i =2}^{n_r^4} (G(J_i)G(T_i) \; |\; \Mbi,\J, \cI)$, as the proposition asserts that 
these subgraphs remain independent after the conditioning. 

This process specifies a graph $G$ such that $H\sim \mu_r$ and $G\sim \mu_{r+1}$ conditioned on $G(J_1)=H,M_1,\J,\cI$.
Notice that so far the players have not communicated at all.
The players now run the $r$-round protocol $\pi|m_1$ and outputs its induced matching on $H$. 
Notice that this protocol is well defined, as the messages of bidders outside block $B_1$ 
in $\pi|m_1$ in every round ($r\in\{2,3,\ldots, r+1\}$) are completely determined by their respective inputs on the random tape and the content of the blackboard, 
since $\pi$ was assumed to be a deterministic protocol. 
Call the resulting protocol $\tau$.

By construction, the expected matching size of $\tau$ (over the private and public randomness $M_1,\J,\cI,G$) with respect to $H$ is 
$$  \Ex{\substack{m_1 \\ \J, \cI}} \Ex{\substack{H \sim \mu_r \\ G|\; (G(J_1)=H,m_1,\J, \cI)}} \left[ N_{\pi|m_1}(G,H)  \right] = 
\Ex{\substack{m_1 \\ \J, \cI}} \Ex{\substack{G(J_1) \sim \mu_r \\ G|\; (G(J_1),m_1,\J, \cI)}} \left[ N_{\pi|m_1}(G,G(J_1)) \right] .$$ 
By the averaging principle, there is some fixing 
of the randomness of $\tau$ that obtains (at least) the same expectation
as above with respect to $H\sim \mu_r$. Call this deterministic protocol $\tau'$.
But $\tau'$ is a deterministic $r$-round protocol over $\mu_r$, hence the inductive hypothesis asserts that 
$$\Ex{\substack{m_1 \\ \J, \cI}} \Ex{\substack{G(J_1)\sim \mu_r \\ G|\; (G(J_1),m_1,\J, \cI)}} \left[ N_{\pi|m_1}(G, G(J_1))  \right]  \;
\leq \Ex{H\sim \mu_r}[N_{\tau'}(H)]  \;
\leq t(r),$$
as claimed.

\end{proof}


We are now in shape to complete the entire proof of Theorem \ref{thm_main_lower_bound}. 
Plugging in the bounds of \eqref{eq_main_4} and Claim \ref{claim_round_elim} into equation \eqref{eq_main_2}, we have

\begin{align}
& \Ex{G\sim \mu_{r+1}} [N_{\pi}(G)]  \leq \ell\cdot n_r^2 \cdot m_r +  n_r^4 \cdot \left[ \dr^{1/2}\cdot n_r + t(r)  \right] \nonumber \\
& = \ell\cdot n_r^2 \cdot m_r +  n_r^4 \cdot \left[ \dr^{1/2}\cdot n_r + \tr  \right] \nonumber \\
& \leq n_r^4 \cdot \left[ 5n_r\cdot \dr^{1/2} +  5n_r\cdot \left(\sum_{k=0}^{r-1}\Delta_k^{1/2}\right)  \right]  
\;\;\;\;\; \text{(since $\ell\cdot n_r^2 \cdot m_r + n_r^4< 4 n^{5}_r \dr^{1/2}$)} \nonumber\\
& = 5n_r^{5} \cdot \left(\sum_{k=0}^{r}\Delta_k^{1/2}\right) = 5n_{r+1} \cdot \left(\sum_{k=0}^{r}\Delta_k^{1/2}\right)  \;\;\;\; 
\text{(since by definition, $n_{r+1}=n_r^{5}$)} \nonumber \\
& = t(r+1) - 1 \nonumber \\
& < t(r+1).
\end{align}

This proves the induction step \eqref{eq_induction_hyp}, and therefore concludes the entire proof of 
Theorem \ref{thm_main_lower_bound}. \end{proof}

\vspace{.2in}

Theorem \ref{thm_main_lower_bound} immediately implies our desired lower bound:
\begin{proof}[Proof of Theorem \ref{thm_main_lb}]
By Theorem \ref{thm_main_lower_bound}, the expected matching size of any $r$-round protocol $\pi$ under $\mu_r$ 
is at most
\begin{align*}
& \Ex{G \sim \mu_{r}} [N_{\pi}(G)] \leq t(r) = \tr \\
&= 1 + 5n_r\cdot \sum_{k=0}^{r-1}\left( \dfrac{1}{n_k}\right)^{1/2}
= 1 + 5n_r\cdot \sum_{k=0}^{r-1}\left( \dfrac{1}{\ell^{5^{k+1}}}\right)^{1/2},
\end{align*}
since by definition of $\mu_r$, $n_{r} = n_{r-1}^{5}$, and $n_0:=\ell^{5}$, hence $n_r=\ell^{5^{r+1}}$.
Since this is a doubly-exponential decaying series (and as long as $\ell$ is large enough than some absolute constant), 
 we can upper bound the sum by, say, 
\begin{align*}
&\leq 5n_r\cdot  \Delta_0^{1/5}  = 5n_r\cdot  \left(\dfrac{1}{\ell^{5}}\right)^{1/5} \\ 
& = \frac{5}{\ell}\cdot n_r = 5\cdot \frac{n_r}{n_r^{1/5^{r+1}}} = 5\cdot n_r^{1-1/5^{r+1}},
\end{align*}
since, by property (4) in Remark \ref{rem_properties}, $n_r=\ell^{5^{r+1}} \Longleftrightarrow \ell = n_r^{1/5^{r+1}}$. 

\end{proof}


\bibliographystyle{alpha}
\bibliography{refs,more-refs}

\end{document}